\newcommand{\bF}{\mathbb{F}}
\newcommand{\cC}{\mathcal{C}}
\newcommand{\sC}{\mathscr{C}}
\newcommand{\supp}{\text{supp}}
\newcommand{\probset}{(R,B)}
\newcommand{\task}{T = (k,M)}
\newcommand{\model}{EIC}
\newcommand{\Gi}{G|_{\tilde{N}_i}}
\newcommand{\Cp}{(C)_{\probset}}
\newcommand{\Dp}{(D)_{\probset}}
\newcommand{\Tp}{(T)_{\probset}}
\newcommand{\Aprob}{\mathcal{A}_{\probset}}
\newcommand{\C}{(C)}
\newcommand{\D}{(D)}
\newcommand{\T}{(T)}
\newcommand{\N}{\tilde{N}}
\newcommand{\data}{\mathcal{D}}
\newcommand{\decvec}{\boldsymbol{\alpha}}
\newcommand{\indvec}{\boldsymbol{e}}
\newcommand{\sendmat}{\beta}
\DeclareMathOperator{\minrk}{minrk}
\DeclareMathOperator{\rminrk}{r-minrk}
\DeclareMathOperator{\rk}{rk}
\newtheorem{theorem}{Theorem}
\newtheorem{lemma}{Lemma}
\newtheorem{cor}{Corollary}
\newtheorem{definition}{Definition}
\newtheorem{remark}{Remark}
\newtheorem{algo}{Algorithm}
\newcommand{\mkw}[1]{\textbf{ \textcolor{blue}{[ #1 --mary ]} } }
\newcommand{\amp}[1]{\textbf{ \textcolor{purple}{[ #1 --alex ]} } }
\begin{document}



\title{Embedded Index Coding\thanks{A preliminary version of this work appeared at ITW 2019.}}
%
%
%
\author{Alexandra Porter and Mary Wootters\thanks{AP is with the Department of Computer Science, Stanford University.  MW is with the Departments of Computer Science and Electrical Engineering, Stanford University.  This work is partially supported by NSF grant CCF-1657049 and NSF CAREER grant CCF-1844628. AP is partially supported by the National Science Foundation Graduate Research Fellowship under Grant No. DGE-1656518}}

%
%

\markboth{Journal }%
{Porter and Wootters: Embedded Index Coding}
%



\maketitle

\begin{abstract}
Motivated by applications in distributed storage and distributed computation, we introduce \emph{embedded index coding} (EIC).  EIC is a type of distributed index coding in which nodes in a distributed system act as both broadcast senders and receivers of information. We show how embedded index coding is related to index coding in general, and give characterizations and bounds on the communication costs of optimal embedded index codes. We also define \emph{task-based}  EIC, in which each sending node encodes and sends data blocks independently of the other nodes. Task-based EIC is more computationally tractable and has advantages in applications such as distributed storage, in which senders may complete their broadcasts at different times. Finally, we give heuristic algorithms for approximating optimal embedded index codes, and demonstrate empirically that these algorithms perform well.

\end{abstract}


\section{Introduction}
\subsection{Motivation}

In 
\emph{index coding}, defined by Bar-Yossef, Birk, Jayram and Kol in~\cite{bar2006index}, sender(s) encode data blocks into messages which are broadcast to receivers.  The receivers already have some of the data blocks, and the goal is to take advantage of this ``side information'' in order to minimize the number of messages broadcast.
For example, if node $r_1$ knows a data block  $b_1$ and node $r_2$  knows block  $b_2$, a sender $S$ can broadcast $b_1\oplus b_2$. Then $r_1$ can cancel out $b_1$ and $r_2$ can cancel $b_2$ such that both nodes learn a distinct new block from a single broadcast message.

Index coding is typically studied in the models depicted in Figures~\ref{fig:icexample1} and \ref{fig:icexample2}, where the senders are distinct from the receivers.  In this paper, we consider a setting---depicted in Figure~\ref{fig:icexample3}---where the senders \em are \em the receivers.  This is similar to a ``peer-to-peer'' network model, but in this setting nodes are always communicating by broadcasting to the full network, rather than communicating with each other directly.  This model is motivated by applications in distributed storage and distributed computation.  For example, in coded computation, e.g.~\cite{li2016fundamental}, the \emph{shuffle} phase consists of nodes communicating computed values with each other. 
 
\begin{figure}
	\centering  
	\subfloat[{\label{fig:icexample1}}]{\includegraphics[width=.15\textwidth]{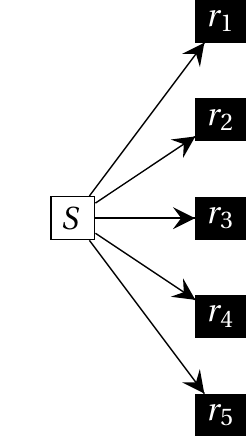}}
		\subfloat[{\label{fig:icexample2}}]{\includegraphics[width=.15\textwidth]{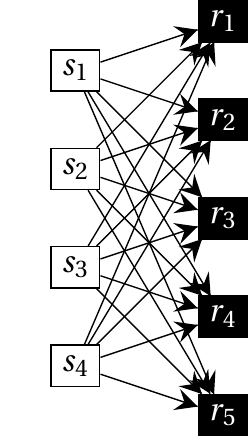}}
	\subfloat[{\label{fig:icexample3}}]{\includegraphics[width=.18\textwidth]{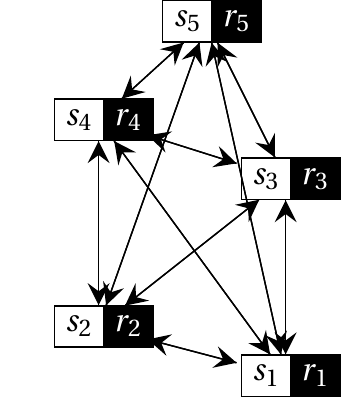}}
	\caption{Communication model for (a) centralized index coding with sender $S$, receivers $r_1,...,r_5$; (b) general multi-sender index coding with  senders $s_1,..,s_4$ and receivers $r_1,...,r_5$; and (c) embedded index coding, a special case of (b) with joint sender and receiver nodes $r_1=s_1,...,r_5=s_5$. 
	}\label{fig:modeltypes} 

\end{figure}

We call this model \em embedding index coding \em (EIC). EIC can be seen as a special case of the multi-sender index coding model in Figure~\ref{fig:icexample2}. In this paper, we will demonstrate that by considering EIC as a special case, we can prove new results and design faster algorithms than are available for the more general multi-sender index coding problem. 

We also introduce a new notion of solution to an embedded index coding problem called a \em task-based \em solution.  
In a task-based solutions, the communication can be partitioned into independent tasks, so that each receiver is only reliant on a single sender to get a particular block.  This can be seen as a generalization of \em
Instantly Decodable Network Codes\em~\cite{le2013instantly} which have been studied with similar motivation (see Remark~\ref{rem:icandidnc}).  Task based solutions are also related to \em Locally Decodable Index Codes\em~
\cite{natarajan2018locally} (see Remark~\ref{rem:ldic}). 

As we will see, there are efficient heuristics to find good task-based solutions to EIC problems.  Moreover, task-based solutions can be more robust to failures or delays: if a sender's messages are corrupted or lost, the messages from other senders can still be used fully to decode data blocks.

\subsection{Outline and Contributions}
In Section~\ref{sec:related} we review related  work in more detail. In Section~\ref{sec:framework} we formally define the EIC problem and several notions of solution. In Section~\ref{sec:relationships} we show how  EIC problems relate to more general index coding and we analyze how different notions of solutions are related. In Section~\ref{sec:algos} we provide algorithms for approximating optimal EIC solutions and demonstrate empirically that they perform well.

 Our contributions can be summarized as follows.
\begin{enumerate}
\item We define \emph{embedded index codes}, a type of distributed index code in which nodes function as both broadcast senders and receivers.
\item We define \emph{task-based index coding}, which seems more computationally tractable than a general solution to an EIC problem, and can be thought of as relaxing the concept of \emph{instantly decodability} in network codes. 
\item We prove several results establishing relationships between centralized (single-sender) index coding, EIC, and task-based EIC. In particular, we show that the optimal communication for a general EIC problem is only a factor of two worse than the optimal communication in the centralized model; we give characterizations and bounds for the optimal communication cost of the best task-based solutions to an EIC problem; and we show separations between the three models.
\item Based on the (proofs of) the bounds described above, we design heuristics for designing general EICs and and task-based EICs.  We give empirical evidence that these approximation algorithms perform well. 
\end{enumerate}

\section{Related Work}\label{sec:related}
In this section we briefly review related work.
Index coding was first introduced by~\cite{bar2006index}, based on the Informed-Source Coding on Demand (ISCOD) model proposed by~\cite{birk2006coding}, and many extensions and variations have been studied, including non-linear index coding~\cite{lubetzky2009nonlinear} and multi-sender index coding~\cite{ong2016single}. 
We focus on \em linear \em index coding, where the messages broadcast are linear combinations of the original data. 

The work of~\cite{bar2006index} 
characterized the number of broadcasts required to solve an index coding problem in terms of the \em minrank \em (c.f. Definition~\ref{def:minrank}) of a relevant graph. 
The minrank is difficult to compute exactly, and a number of approximations and heuristics have been studied for computing optimal linear index  codes~\cite{chaudhry2008efficient,chaudhry2011complementary,shanmugam2013local,neely2013dynamic,tahmasbi2015critical,thapa2017interlinked}.
We will also use the minrank, and heuristics for computing it, in our approach. 
%
%
 

Embedded index codes are a special case of the linear multi-sender index codes in~\cite{kim2019linear} and~\cite{li2018multi}, which both consist of multiple senders and multiple receivers, but as two distinct and non-overlapping sets of nodes; this is the setting depicted in Figure~\ref{fig:icexample2}.  In~\cite{li2018multi}, rank minimization is used in an approach similar to our method. The approaches of \cite{li2018multi,kim2019linear} can also be applied to EIC, and 
we compare these approaches in more detail in Section~\ref{sec:algos}.

The embedded model in Figure~\ref{fig:icexample3} has been studied before in \cite{el2010coding}.  In that work, the authors study a special case of EIC, where each node wants all of the data blocks it does not already have.  In this setting, they develop a greedy algorithm which uses a near-optimal number of broadcasts.  However, their approach crucially uses the fact that every node wants every block, and does not seem to generalize to the general EIC setting that we study here.

While our coding scheme is deterministic, our multi-sender network model is similar to those studied with \em composite coding, \em an approach based on randomized coding~\cite{arbabjolfaei2013capacity}. Multi-sender models and achievable rate regions using composite coding are defined in~\cite{sadeghi2016distributed,li2017improved,li2018cooperative}; to the best of our knowledge these results are not directly applicable to our scheme.

Index coding is a special instance of the network coding problem (e.g.,~\cite{li2003linear}), in which source nodes send information over a network containing intermediate nodes, which may modify messages, in addition to receiver nodes. It has also been shown that network coding instances can be reduced to index coding instances~\cite{el2008relation,effros2015equivalence}. Real-Time Instantly Decodable Network Codes (IDNC's)~\cite{le2013instantly} aim to minimize completion delay of the communication task, rather than the index coding goal of minimizing total number of messages. Our task-based solutions are a generalization of instant decodability in index codes (see Remark~\ref{rem:icandidnc}). 

Task-based solutions are also related to the notion of \em locally decodable index codes\em. An index coding solution has \em locality  \em $r$ if each node uses at most $r$ received symbols to decode any message symbol. There is tradeoff between optimal broadcast rate and locality of solutions for a given index coding problem~\cite{natarajan2018locally}. 
When $r=1$, locally decodable index codes are a special case of task-based schemes, although the notions diverge for more general $r$ (see Remark~\ref{rem:ldic}).

 Our construction is motivated by the problem of data shuffling for coded computation, such as in~\cite{li2016fundamental}, or for distributed storage systems which need to redistribute data among the nodes. 
  In data shuffling, after an initial round of computation, nodes each contain some amount of intermediate results, which then need to be shared with other nodes to continue the computation.
Other connections between index coding and distributed storage have been established,  but are not directly related to  our  work. These include the relationship between an optimal recoverable distributed storage code and a  general optimal index code~\cite{mazumdar2014duality} and the duality of linear index codes and Generalized Locally Repairable codes was shown by~\cite{shanmugam2014bounding,arbabjolfaei2015three}. 

Finally,  index coding techniques can also be applied to coded caching (e.g.~\cite{maddah2014fundamental},~\cite{ghasemi2017improved} and references therein), in  which nodes may request and store data dynamically. Coded multicasting similar to index coding has been applied to decentralized coded caching~\cite{maddah2015decentralized}, and our work could also be applied in coded caching. 

\paragraph{Subsequent work.} In our work, we introduce the notion of task-based schemes for EIC, and develop heuristics for these schemes.  However, we left it as an open problem to understand the limitations of task-based schemes relative to other schemes.  Since our work first appeared, Haviv has solved this problem by giving tight bounds on the gap between task-based schemes and centralized schemes for EIC~\cite{haviv2019taskbased}.  Briefly, this work shows that there for any graph $G$, the length of the best task-based scheme is at most quadradically worse than the best scheme without the task-based restriction, and also shows that there exist graphs where this gap is asymptotically tight.

\section{Framework}\label{sec:framework}
In this section we formally describe our model for Embedded Index Coding.  

We assume that there is a set of $m$ data blocks, $\data \in (\bF_2^\ell)^m$, where each data block is an element of $\bF_2^\ell$; when convenient, we will view $\data \in \bF_2^{m \times \ell}$ as an $m\times \ell$ boolean matrix with the $m$ data blocks as rows.  These $m$ data blocks are stored on $n$ storage nodes; each node $i$ stores a subset of the data blocks, and some data blocks may be stored on multiple nodes.  We assume that each node can perform local computations and can broadcast information over an error-free channel to all the other nodes.  
In this work, we focus on a \em linear \em model, where each node is restricted to computing $\bF_2$-linear combinations of data blocks. 


An Embedded Index Coding (EIC) problem is defined in terms of which data blocks each node \emph{has} and \emph{needs}.  
It will be convenient to represent these ``has'' and ``needs'' relationships in terms of binary matrices $B$ and $R$ respectively.
\begin{definition}
An \emph{Embedded Index Coding (EIC) problem} is specified by a pair of matrices $B,R \in \mathbb{F}_2^{n\times m}$ s.t. $\emph{\supp}(B) \cap \emph{\supp}(R) = \emptyset$.
\end{definition}
Informally, the interpretation should be that in an EIC problem $\probset$, a node $u$ \emph{needs} block $a$ if $R_{ua} = 1$ and $\emph{has}$ block $b$ if $B_{ub} = 1$. 

Each node $u$ will broadcast a set of $b_u \in \mathbb{N}$ linear combinations of the blocks it has, and the goal is for each node to be able to recover all of the blocks that it needs.  We formalize this in the following definition. 

\begin{definition}\label{def:generalsolve}
For an EIC problem $\probset$ a \emph{linear broadcast solution} that  solves $\probset$ is a collection of matrices $\sendmat^{(1)},...,\sendmat^{(n)}$ and integers $h_1,...,h_n$ with $\sendmat^{(u)} \in \mathbb{F}_2^{h_u\times m}$ for each $u\in[n]$ so that:
\begin{itemize}
\item For each $u \in [n]$ and each $a \in [m]$ so that $B_{ua} = 0$, the $a^{th}$ column of $\sendmat^{(u)}$ is zero.
\item For each $u\in [n]$ and each $a \in [m]$ so that $R_{ua} = 1$, there is some vector $\decvec^{(u,a)} \in \mathbb{F}_2^{\sum_{\ell} h_{\ell}+m}$ so that 
$$\indvec_a = \decvec^{(u,a)}\cdot \left[\begin{array}{c}  \sendmat^{(1)} \\ \hline  \sendmat^{(2)} \\ \hline  \vdots \\ \hline  \sendmat^{(n)} \\\hline  \mathrm{diag}(B_u) \end{array}\right],  $$
where $B_u$ is the row of $B$ indexed by $u$ and $\mathrm{diag}(B_u)$ is the matrix with $B_u$ on the diagonal.
 Above, $\indvec_j$ denotes the $j^{th}$ standard basis vector.
\item The \emph{length} of an EIC solution is $\Sigma_ah_a$, the number of symbols broadcast. We also refer to this as the \emph{communication cost} of the solution.
\end{itemize}
\end{definition}

To use a linear broadcast solution, each node $u$ computes and broadcasts 
$ \sendmat^{(u)} \cdot \data, $
where we view $\data \in \bF_2^{m \times \ell}$ as a matrix whose rows are the data blocks.  
This can be computed locally because the only non-zero columns of $\sendmat^{(u)}$ correspond to non-zero entries of row $B_u$, i.e. blocks node $u$ has.

In order to decode the blocks it wants, each node $u$ uses the fact that
$$\text{block $a$} = \indvec_a \cdot \data = \decvec^{(u,a)}\cdot \left[\begin{array}{c}  \sendmat^{(1)} \\ \hline  \sendmat^{(2)} \\ \hline  \vdots \\ \hline  \sendmat^{(n)} \\\hline  \mathrm{diag}(B_u) \end{array}\right] \cdot \data,  $$
and thus block $a$ is a linear combination (given by $\decvec^{(u,a)}$) of the broadcasts $\sendmat^{(1)}  \data, \ldots, \sendmat^{(n)} \data$ that node $u$ recieves and the data blocks that $u$ already has. 

\subsection{Problem Graph and Problem Matrix}
We next define some representations of embedded index coding problems (extending the work of \cite{bar2006index}) which will be useful in studying the length of solutions and the construction of algorithms. 

We begin by a defining a graph $G$ which captures an EIC problem.
The vertices of $G$ will correspond to \em requirement pairs \em of the EIC problem, defined as follows.
\begin{definition}\label{def:pairset}
Given an EIC problem $\probset$, the set of \em requirement pairs \em for $\probset$ is 
 $P = \{(u,a) \in [n]\times[m]: R_{ua} = 1\}$. 
\end{definition}


Now we can formally define the problem graph $G$ for an EIC problem $\probset$.
\begin{definition}\label{def:probgraph}
Given an EIC problem $\probset$, the \emph{problem graph} $G = (V,E)$ corresponding to $\probset$ is the graph $G$ with vertices $V =\{v_{(u,a)} : (u,a) \in P\}$ and (directed) edges  $E = \{(v_{(u,a)},v_{(w,b)}) : B_{ub} = 1 \text{ or } a=b\}$.
\end{definition}
That is, for $(u,a)$ and $(w,b)$ in $P$, there are two reasons that there could be an edge from the vertex $v_{(u,a)}$ to the vertex $v_{(w,b)}$: either the node $u$ has the block $b$ that the node $w$ wants, or else the two blocks $a$ and $b$ are the same block.  As we will see, these two types of edges play two different roles.

Figure~\ref{fig:sideinfograph} shows two examples of problem graphs. In Figure~\ref{fig:probgraphnormal}, all edges indicate where a node has a block that another is requesting, i.e. cases where $(v_{(u,a)},v_{(w,b)}) \in E(G)$ because $B_{ub} =1$. In Figure~\ref{fig:probgraphnot1}, dashed edges indicate pairs of vertices which represent two requests for the same block, i.e. cases where $(v_{(u,a)},v_{(w,b)}),(v_{(w,b)},v_{(u,a)}) \in E(G)$ because $a=b$.

\begin{figure}
	\centering  
	\subfloat[{\label{fig:probgraphnormal}}]{\includegraphics[width=.3\textwidth]{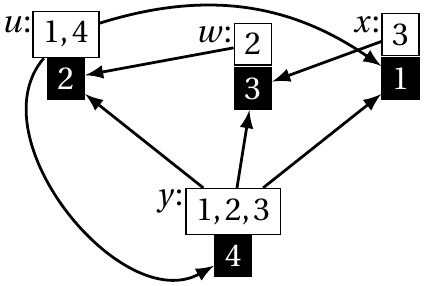}}\\
		\subfloat[{\label{fig:probgraphnot1}}]{\includegraphics[width=.38\textwidth]{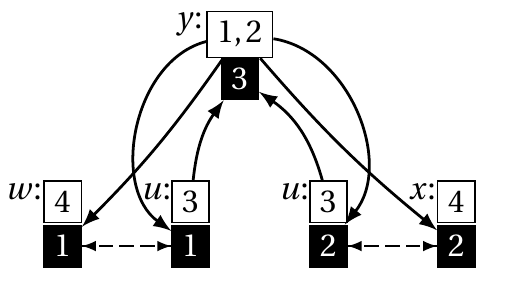}}
	\caption{Examples of a problem graph $G$ for nodes $u,w,x,y$ and data block $\mathcal{D}_1, \mathcal{D}_2, \mathcal{D}_3, \mathcal{D}_4$. Each pair of boxes is a vertex in $G$, where the black boxes contain indices of requested data blocks and the white boxes contain indices of side information blocks; each pair of boxes is labeled with the relevant node.  In (a), node $u$ is requesting block $\data_2$ and has blocks $\data_1,\data_4$ in its side information. In (b), node $u$ requests blocks $1$ and $2$, represented by two separate vertices. Since $w$ also requests block $1$ and $x$ also requests block $2$, we have a different type of edge (dashed) indicating vertices corresponding to the same requested block. }\label{fig:sideinfograph} 	
	
\end{figure}

\begin{definition}\label{def:fitmatrix}
Given a graph $G = (V,E)$, we say matrix $A\in \mathbb{F}_2^{|V|\times|V|}$ \emph{fits} $G$ if:
\begin{enumerate}
\item $A_{kk} = 1$ for all $k \in [|V|]$ and 
\item for any $k,\ell \in |V|$,$(k,\ell) \notin E$ implies that  $A_{k\ell}  = 0$.
\end{enumerate}
\end{definition}
Thus if $M$ is the adjacency matrix of $G$ and matrix $A$ fits $G$, the non-zero entries of $A$ (other than the diagonal) are a subset of the non-zero entries of $M$.

\begin{definition}\label{def:minrank}
The \emph{minrank} of a graph $G$ in field $\bF_2$, denoted $\minrk_{2}(G)$, is the rank of the lowest-rank matrix $A$ over $\bF_2$ which fits $G$:
$$\minrk_2(G) := \min \{\rk_2(A):A \text{ fits } G\}$$
\end{definition}

In Section~\ref{sec:minrkprobgraph}, we will show how our definition of a problem graph generalizes the \emph{side information graph} defined for index coding (that is, the centralized case of Figure~\ref{fig:modeltypes}(a), where each node requests a single unique block).  In this setting, it was shown in \cite{bar2006index} that $\minrk_2(G)$ is the length of the optimal index code.  We will show later how the minrank can also be used in computing solutions for EIC problems.

\subsection{Task-Based Solutions}\label{sec:tbsoln}

We now define a \em task-based solution, \em which is a particular type of solution to an embedded index coding problem.
As we will see, we can design efficient heuristics to find task-based solutions, and additionally task-based solutions may be more useful in settings with node failures.
\begin{definition}\label{def:newtask}
A task $\task$ is defined by a sender node $k$ and a set of pairs 
\[ M \subseteq \{ (u,a) \in P \,:\, B_{ka} = 1 \} \]
\end{definition}
Informally, if $T = (k,M)$ and $(u,a) \in M$, then this means that it is part of the  node $k$'s task to send the block $a$ to the node $u$.  Notice that this is not completely general: it rules out the possibility that the node $u$ could recover the block $a$ from two separate sender nodes.

A task-based solution is one built out of tasks.  We formally define this as follows.
\begin{definition}\label{def:tasksoln}
A \emph{task-based solution} to an EIC problem $\probset$ with requirement pairs $P$ is a linear broadcast solution  $\sendmat^{(1)},...,\sendmat^{(n)}$  so that $\sendmat^{(\ell)}\in \mathbb{F}_2^{h_{\ell}\times m}$, such that 
for each $(u,a) \in P$, 
there is an $\ell \in [n]$ and a coefficient vector $\decvec_{\ell}^{(u,a)}\in\mathbb{F}_2^{h_\ell+m}$ such that 
\[ \indvec_a = \decvec_{\ell}^{(u,a)}\cdot \left[\begin{array}{c} \sendmat^{(\ell)} \\ \hline  \mathrm{diag}(B_u) \end{array}\right] .\]
We say that such a node $\ell$ is \emph{responsible} for $(u,a)$ in the task $T$.
\end{definition}
Informally, a task-based solution is a linear solution in which each node $u$ decodes each requested block $a$ using only messages from one sender node $\ell$ who is responsible for $(u,a)$.  That is, $\ell$ broadcasts a vector $\sendmat^{(\ell)} \cdot \data$, and $u$ should be able to recover $a$ from this vector and its local side information.

A task-based solution to $\probset$ is related to the corresponding problem graph $G=(V,E)$ by specifying a partition of the vertices. Let  $N^+(v_{(u,a)}) \subseteq V$ denote the out-edge neighborhood of a vertex $v_{(u,a)} \in V$: that is,
\[ N^+(v_{(u,a)}) = \left\{ v_{(w,b)} \,:\, ( v_{(u,a)}, v_{(w,b)} ) \in E \right\}. \]

\begin{definition}\label{def:sendernborhood}
For an EIC problem $\probset$  with problem graph $G$,  define the \emph{sender neighborhood} of node $u\in [n]$ as:
 \[  N_u = \{v_{(w,b)}\in V: B_{ub} = 1\}. \]
\end{definition}
That is, the sender neighborhood $N_u$ of a  node $u$ is the set of vertices in $V$ corresponding to node-block pairs $(w,b)$ so that the node $u$ has the block $b$ (and thus could $u$ could send $b$ to $w$).
In terms of the problem graph $G$, $N_u \subseteq \cap_aN^+ (v_{(u,a)})$.

Figure~\ref{fig:probgraphns} shows examples of sender neighborhoods for the problem graph examples shown in  Figure~\ref{fig:probgraphnormal}. 

\begin{remark}\label{rem:icandidnc}
Finding tasks $(k,M)$ which maximize $|M|$ while minimizing total broadcast messages is a generalization of Instantly Decodable Network Codes (IDNCs)~\cite{le2013instantly}.  More precisely, solving the IDNC problem on sender neighborhood $N_k$ for some node $k$ finds the task $(k,M)$ with  maximal $|M|$ such that only one message needs to be broadcast by sender $k$ to satisfy all $(u,a) \in M$.
\end{remark}

\begin{remark}\label{rem:ldic}
Task-based solutions are also related to locally decodable index codes (LDICs)~\cite{natarajan2018locally}.  In an LDIC, a (centralized) index coding solution has locality $r$ if each node uses at most $r$ of the broadcast messages to decode any one block.  In the case that $r=1$, the natural generalization of LDICs to the decentralized setting is a special case of a task-based scheme.  When $r>1$, the two notions are different, but they have a similar flavor of restricting the information that can be used to reconstruct a single block.
\end{remark}

\begin{remark}\label{rem:icminiproblem}
Each node $k$ and its sender neighborhood $N_k$ (or any subset of $N_k$) together form an instance of an index coding problem with a single source: node $k$ is a source which has all blocks requested by nodes in $N_k$.
Thus the communication model is the same as in~\cite{bar2006index}, but it is not necessarily a single unicast problem (see Definition~\ref{def:onetoone}); that is, it is not the case that each node wants a unique block.
  \end{remark}


\begin{figure}
	\centering  
	\subfloat[{\label{fig:probgraphnnorm}}]{\includegraphics[width=.3\textwidth]{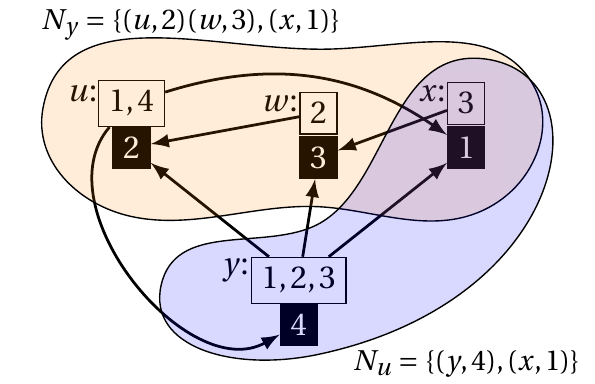}}\\
		\subfloat[{\label{fig:probgraphnon1t1n}}]{\includegraphics[width=.35\textwidth]{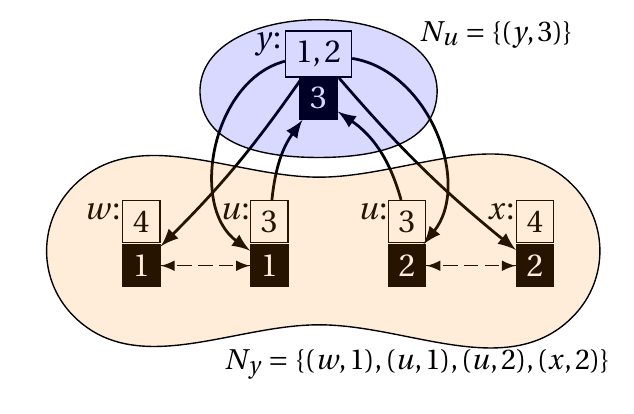}}
	\caption{ Examples from (a) Figure~\ref{fig:probgraphnormal} and (b) Figure~\ref{fig:probgraphnot1},  shown with sender neighborhoods of nodes $u$ and $y$. Note that in (b), the out-neighborhood of the vertex $(u,1)$ is $\{(w,1),(y,3)\}$ and the out-neighborhood of $(u,2)$ is $\{(x,2),(y,3)\}$ but the sender neighborhood (Definition~\ref{def:sendernborhood}) of $u$ is the intersection of these.
	}\label{fig:probgraphns} 	
\end{figure}

\begin{definition}\label{def:nborpartition}
Let $G$ be a problem graph with sender neighborhoods $N_1,...,N_n$. A \emph{neighborhood partition} is a set $\{\N_1,...,\N_n\}$ such that \begin{enumerate}
\item $\N_i \subseteq N_i$ for all $i = 1,...,n$,
\item $\N_i \cap \N_j = \emptyset$ for any $i,j \in [n]$,
\item and $\bigcup_{i \in[n]} \N_i = V(G)$.
\end{enumerate}
\end{definition}
Given an EIC problem with problem graph $G$ and task-based solution $T$, there is a corresponding neighborhood partition $\{\N_1,...,\N_n\}$: each vertex $v_{(u,a)}$ in $G$ belongs to the $\N_i$ such that $i \in [n]$ is responsible for $(u,a)$ in $T$. Furthermore, any neighborhood partition trivially corresponds to at least one task-based solution, in which each sender $i \in [n]$ broadcasts each block requested by a node in $\N_i$ as a separate message.


For example, there is a task-based solution for the EIC problem shown in Figure~\ref{fig:probgraphnormal}, using sender neighborhoods $\N_y = \{u,w,x\}$ and  $\N_u = \{y\}$. The messages for the task executed by node  $y$ are $\data_1\oplus \data_2$ and $\data_2\oplus \data_3$, and the message broadcast by node $u$ for its task is $\data_4$. Then nodes  $u$,  $w$, and  $x$ each decode their requested block from the task executed by node $y$, and node $y$ decodes its request from the task executed by node $u$.  Task-based solutions like this can be easier to compute than a distributed solution in general, and they allow some independence between nodes: in the example, nodes $w$ and $x$ do not need to wait for any node other than node $y$ to be able to decode their requested block.

\begin{remark}
While we only study task-based solutions on the EIC model, task-based solutions can also be defined  for multi-sender index coding in general.
\end{remark}

\subsection{Centralized Solutions}
We will later compare decentralized solutions to embedded index coding problems to an idealized centralized index coding solution. To that end, we define a solution to a an embedded index coding problem which assumes some oracle server exists with access to all of  $\data$ (and  has no requirements itself).

  \begin{definition}\label{def:centrsoln}
  For an EIC problem defined by $\probset$, a \emph{centralized linear broadcast solution} which solves $\probset$ is a matrix $\sendmat$ and $h\in \mathbb{N}$ with $\sendmat \in \mathbb{F}_2^{h\times m}$ such that for each $u \in [n]$ and each $a \in [m]$  with $R_{ua} = 1$, there  is some vector $\decvec^{(u,a)} \in \mathbb{F}_2^{h+m}$ so that $$\indvec_a = \decvec^{(u,a)} \cdot \left[\begin{array}{c}\sendmat \\\hline  \text{diag}(B_u) \end{array}\right].$$ 
  \end{definition}
Finally, we use the following symbols to denote the optimal lengths for each type of solution:
\begin{definition}
Let $\Cp$ denote the minimum length of a centralized linear solution to the EIC problem $\probset$ as defined in Definition~\ref{def:centrsoln}.
 
 Let $\Dp$ denote the minimum length of a decentralized linear broadcast solution to the EIC problem $\probset$ as  defined in Definition~\ref{def:generalsolve}. 
 
 Let $\Tp$ denote the minimum length of a decentralized and task-based solution to the EIC problem $\probset$ as defined in Definition~\ref{def:tasksoln}.
 \end{definition}

\section{Minimum Code Lengths  and Relationships}\label{sec:relationships}

In this section, we analyze the values of $\Cp$, $\Dp$, and $\Tp$ for a given $\probset$. We drop $\probset$ from the notation when comparing two of these under the same $\probset$ in general. While  it has been shown that graph-theoretic upper and lower bounds on minrank can have significant separation~\cite{shanmugam2014graph}, they are still useful in comparing the achievable minimum lengths in different solution types for EIC problems.

\subsection{$\Cp$ and minrank of the Problem Graph}\label{sec:minrkprobgraph}
First, we discuss an idealized centralized solution to an EIC problem, and introduce some useful machinery.

The work~\cite{bar2006index} defines the \em side information graph \em for an index coding problem.  We show how our problem graph is an effective generalization of the side information graph such that the same technique of using minrank to find an optimal centralized solution applies. 
The side information graph as defined by~\cite{bar2006index} is equivalent to a Problem Graph (Definition~\ref{def:probgraph})  for any \emph{single unicast} EIC problem $\probset$ (defined below).

\begin{definition}\label{def:onetoone}
An EIC specified  by $\probset$ is a \emph{single unicast} index coding problem if 
\begin{enumerate} \item every node requests exactly one data block and  
\item each data block is requested by exactly one node.
\end{enumerate}
\end{definition}
Figure~\ref{fig:probgraphnormal} shows the problem graph for a single unicast EIC; Figure~\ref{fig:probgraphnot1} shows the problem graph for an EIC which is not single unicast. 

We will generalize the following theorem, which restates Theorem 5 of \cite{bar2006index} using our definitions:
\begin{theorem}\label{thm:centrminrank}(Theorem 5  of \cite{bar2006index})
Given a single unicast EIC $\probset$ and the corresponding problem graph $G$, $\Cp = \minrk_2(G)$.
\end{theorem}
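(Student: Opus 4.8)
The plan is to prove the two inequalities $\Cp \le \minrk_2(G)$ and $\Cp \ge \minrk_2(G)$ separately, essentially re-deriving Theorem 5 of \cite{bar2006index} in the present notation after first checking that the two models agree. I would begin by recording the structural simplification afforded by the single unicast hypothesis: relabel so that the requirement pairs are $P = \{(1,1),\ldots,(k,k)\}$, i.e. node $i$ requests block $i$. Because all requested blocks are distinct, the ``$a=b$'' clause in Definition~\ref{def:probgraph} produces only self-loops, so every genuine edge $(v_{(i,i)},v_{(j,j)})$ of $G$ arises from $B_{ij}=1$; that is, $G$ is exactly the side information graph, with an edge $i\to j$ precisely when node $i$ holds the block $j$ that node $j$ wants. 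I would also note that Definition~\ref{def:centrsoln} is literally a single-source linear index code: one broadcaster holding all of $\data$ sends $\sendmat\cdot\data$, and each node decodes from this together with its own side information $\mathrm{diag}(B_i)\data$. With these identifications in place, the two directions go as follows.

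For $\Cp \le \minrk_2(G)$, I would start from a matrix $A\in\bF_2^{k\times k}$ that fits $G$ with $\rk_2(A)=\minrk_2(G)=:r$, take $\sendmat\in\bF_2^{r\times m}$ whose rows span the row space of $A$ (extended by zeros on the non-requested columns), and have the server broadcast $\sendmat\cdot\data$. The $i$-th row $A_i$ lies in the row span of $\sendmat$, so node $i$ can form $A_i\cdot\data$. Since $A_{ii}=1$ and every off-diagonal $1$ in $A_i$ sits on an edge, hence on a block $j$ with $B_{ij}=1$ that node $i$ already has, node $i$ subtracts its side information from $A_i\cdot\data$ to recover $\data_i$. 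This yields a valid centralized solution of length $r$.

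For $\Cp \ge \minrk_2(G)$, I would start from any centralized solution $\sendmat\in\bF_2^{h\times m}$ of length $h=\Cp$ and construct a fitting matrix of rank at most $h$. Splitting each decoding vector as $\decvec^{(i,i)}=(\gamma_i,\delta_i)$, acting on $\sendmat$ and on $\mathrm{diag}(B_i)$ respectively, the decoding identity reads $\indvec_i=\gamma_i\sendmat+\delta_i\,\mathrm{diag}(B_i)$ over $\bF_2$. Define $A_i$ to be $\gamma_i\sendmat$ restricted to the $k$ requested columns, and let $A$ be the matrix with these rows. The correction term $\delta_i\,\mathrm{diag}(B_i)$ is supported on $\{j:B_{ij}=1\}$, so on the diagonal $(A_i)_i=1$ --- here I use $\supp(B)\cap\supp(R)=\emptyset$ together with $R_{ii}=1$ to force $B_{ii}=0$ --- while off the diagonal $(A_i)_j\ne 0$ forces $B_{ij}=1$, i.e. $(i,j)\in E$. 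Thus $A$ fits $G$, and since each row is a $\gamma_i$-combination of rows of $\sendmat$ we get $\rk_2(A)\le\rk_2(\sendmat)\le h$, whence $\minrk_2(G)\le h=\Cp$. Combining the two inequalities gives the claim.

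I expect the main obstacle to be bookkeeping rather than a conceptual difficulty: one must handle blocks held as side information but requested by no one (columns $k+1,\ldots,m$) and nodes that request nothing, making sure the fitting matrix $A$ is the correct $k\times k$ object indexed by vertices while $\sendmat$ ranges over all $m$ columns. The two places where the EIC-specific hypotheses do real work are (i) the single unicast assumption, which collapses the ``$a=b$'' edges and makes $G$ a genuine side information graph, and (ii) the disjoint-support condition $\supp(B)\cap\supp(R)=\emptyset$, which is exactly what guarantees $A_{ii}=1$ in the lower bound. Once these are pinned down the argument is the standard minrank argument, and it would also be acceptable to simply invoke Theorem 5 of \cite{bar2006index} after establishing the graph and model equivalence recorded in the first paragraph.
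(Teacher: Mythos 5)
Your proposal is correct: the paper itself gives no proof of this statement, simply citing Theorem~5 of the Bar-Yossef et al.\ paper, but your two-inequality argument (spanning rows of a fitting matrix give a broadcast scheme; decoding vectors of any scheme yield a fitting matrix of rank at most the scheme's length, using $\supp(B)\cap\supp(R)=\emptyset$ to get $1$'s on the diagonal) is exactly the single-unicast specialization of the paper's own proof of the generalization, Theorem~\ref{thm:rminrankopt}, via Lemma~\ref{lem:basisdecode}. So the approach is essentially the same as the paper's, and your preliminary observation that single unicast collapses the ``$a=b$'' edges so that the problem graph coincides with the side information graph is precisely the model equivalence the paper relies on when it invokes the citation.
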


When a problem is not single unicast (in particular when the second condition of Definition~\ref{def:onetoone} does not hold) we constrain the minrank function  over  a subset of possible matrices, constructed as follows:

\begin{definition}
Given an EIC problem $\probset$ and a problem graph  $G= (V,E)$, we define the \emph{column repetition function} $\phi_{\probset}:\mathbb{F}_2^{|V|\times m} \to \mathbb{F}_2^{|V|\times |V|}$ as follows. Given a matrix $A\in \mathbb{F}_2^{|V|\times m}$, construct a matrix $\phi_{\probset}(A) \in \mathbb{F}_2^{|V|\times |V|}$ so that for $v_{(u,a)} \in V$,  the  column of $\phi(A)$ indexed by $v_{(u,a)}$ is equal to the $a^{th}$ column of $A$.
Additionally, we will denote the image of $\phi_{\probset}$ by $\Aprob= \{A': \exists A \text{ s.t. } \phi_{\probset}(A)=A'\}$.
\end{definition}

\begin{remark}\label{rem:preserverank}
The function $\phi_{\probset}$ preserves the rank of a matrix, since it just inserts duplicates of columns.   That is, $\rk_2(\phi_{\probset}(A) )= \rk_2(A)$.
\end{remark}

 For an EIC problem $\probset$, we will use the set $\Aprob$ to restrict the domain of minrank, resulting in the \em restricted-minrank:\em 
\begin{definition}
The \emph{restricted minrank} of a graph $G = (V,E)$ in the field $\mathbb{F}_2$ over set of matrices $\mathcal{A}\subseteq \mathbb{F}_2^{|V|\times |V|}$, denoted $\rminrk_2(G,\mathcal{A})$, is the rank of the lowest-rank matrix $A'\in\mathcal{A}$ which fits $G$:
$$\rminrk_2(G,\mathcal{A}) = \min\{\rk_2(A'): A' \in \mathcal{A} \land A' \text{ fits }G\}.$$
\end{definition}

\begin{lemma}\label{lem:basisdecode}
Let $G$ be the problem graph for an EIC problem defined by $\probset$. Let $A'$ be a matrix that fits $G$, and assume that $A' = \phi_{\probset}(A)$ for some matrix $A \in \mathbb{F}_2^{|V|\times m}.$ 
Suppose that $A^{(r)} \in \mathbb{F}_2^{r \times m}$ is a matrix whose rows are $r$ rows of $A$ which span the rowspace of $A$; thus, the rowspace of $A^{(r)}$ is equal to that of $A$.
Then $A^{(r)}$ 
is a centralized linear broadcast solution  to $\probset$.
 \end{lemma}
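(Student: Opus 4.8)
The plan is to verify Definition~\ref{def:centrsoln} directly with $\sendmat = A^{(r)}$, so I must produce, for every requirement pair $(u,a)\in P$, a decoding vector $\decvec^{(u,a)}$ expressing $\indvec_a$ as a combination of the rows of $A^{(r)}$ and of $\mathrm{diag}(B_u)$. Since the rows of $\mathrm{diag}(B_u)$ are exactly the vectors $\indvec_b$ with $B_{ub}=1$ (and zero rows elsewhere), their span is $\mathrm{span}\{\indvec_b : B_{ub}=1\}$, and the decoding requirement reduces to the single membership statement
\[ \indvec_a \in \mathrm{rowspace}(A^{(r)}) + \mathrm{span}\{\indvec_b : B_{ub}=1\}. \]
Because $A^{(r)}$ was chosen so that its rows span the rowspace of $A$, I may replace $\mathrm{rowspace}(A^{(r)})$ by $\mathrm{rowspace}(A)$ throughout; any realizing combination of rows of $A$ then translates back into a coefficient vector over the rows of $A^{(r)}$ and of $\mathrm{diag}(B_u)$, yielding the required $\decvec^{(u,a)}$.

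Next I would read off the structure of the single row of $A$ indexed by the vertex $v_{(u,a)}$ from the hypothesis that $A' = \phi_{\probset}(A)$ fits $G$. By definition of $\phi_{\probset}$, the column of $A'$ indexed by $v_{(w,b)}$ is the $b$-th column of $A$, so $A'_{v_{(u,a)},\,v_{(w,b)}} = A_{v_{(u,a)},\,b}$. Applying the diagonal condition $A'_{kk}=1$ at $k=v_{(u,a)}$ gives $A_{v_{(u,a)},a}=1$, while the off-diagonal condition gives $A_{v_{(u,a)},b}=0$ whenever $(v_{(u,a)},v_{(w,b)})\notin E$; by Definition~\ref{def:probgraph} this non-edge occurs precisely when $a\neq b$ and $B_{ub}=0$. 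Hence the row $A_{v_{(u,a)}}$ has a $1$ in coordinate $a$, and every other nonzero coordinate lies in $\{b : B_{ub}=1\}$, i.e.\ $A_{v_{(u,a)}} = \indvec_a + \sum_{b:\,B_{ub}=1} c_b\,\indvec_b$ for some $c_b\in\bF_2$.

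To conclude, I note that $B_{ua}=0$ (since $\supp(B)\cap\supp(R)=\emptyset$ and $R_{ua}=1$), so coordinate $a$ is disjoint from the sum and its coefficient is exactly $1$; working over $\bF_2$ this rearranges to $\indvec_a = A_{v_{(u,a)}} + \sum_{b:\,B_{ub}=1} c_b\,\indvec_b$, which is precisely the decoding equation of Definition~\ref{def:centrsoln}. The argument is essentially bookkeeping, and the one step that needs care is translating the two ``fits'' conditions through the column-repetition map $\phi_{\probset}$: one must keep straight that the $v_{(u,a)}$-indexed diagonal entry of $A'$ is the $a$-th (not $v_{(u,a)}$-th) column entry of $A$, which is what pins the coefficient of $\indvec_a$ to $1$ rather than letting it be absorbed into the side-information span; the disjoint-support hypothesis is exactly what guarantees this. (Implicitly I assume every data block is requested by some node, so that the ``fits'' conditions constrain all relevant columns of $A$; columns of $A$ indexing blocks no vertex requests are irrelevant to $A'=\phi_{\probset}(A)$ and may be taken to be zero.)
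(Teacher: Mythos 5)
Your proof is correct and follows essentially the same route as the paper's: both verify the definition of a centralized linear broadcast solution directly by showing that each row $A_{(u,a)}$ equals $\indvec_a$ plus a vector supported on $\{b : B_{ub}=1\}$, so that $\indvec_a$ lies in the row span of $A^{(r)}$ plus the side-information span. If anything, you are slightly more explicit than the paper in deriving this row structure from the ``fits'' conditions through $\phi_{\probset}$ and in flagging the (harmless) caveat about columns of $A$ indexed by blocks that no vertex requests.
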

\begin{proof}
Let $r:=\rk_2(A) = \rk_2(A')$. Without loss of generality, suppose that the first $r$ rows $A_1,...,A_r$ span the rowspace $A$, so any node $u$ and one of its corresponding block request rows $A_{(u,a)}$ can compute 
$A_{(u,a)} = \sum_{i=1}^r \lambda_i^{(u,a)}A_{i}$.  For ease of notation let $\ell :=(u,a)$, so $A_\ell = A_{(u,a)}$ denotes the row of $A$ indexed by node  $u$ requesting block $a$ (Definition~\ref{def:fitmatrix}).  

Let $\beta$ be the matrix 
$$\beta  = \begin{bmatrix} --A_1--\\  \vdots \\ --A_r--\end{bmatrix},$$ so the rows of $\beta  \data$ are  the  encoded messages $\{A_1\cdot\data,A_2\cdot \data,...,A_r\cdot \data\}$.
Then $[\lambda^{(\ell)}_1,...,\lambda^{(\ell)}_r] \cdot  (\beta  \data) =A_{\ell}\cdot \data$, so from the encoded messages node $u$ can compute $ A_{\ell}\cdot \data$.

We next define the vector $\mu \in \mathbb{F}_2^m$: let $\mu_b = A_{\ell b}$ if $B_{ub} = 1$, otherwise let $\mu_b= 0$ (equivalently, $\mu = B_u\odot A_{\ell}$\footnote{Here, we use  $\odot$ to  denote \em Hadamard \em or entry-wise product}).  By definition of $B$, node $u$ can compute $\indvec_b\data$ for any $b\in [m]$ such that $B_{ub} =  1$ and thus $u$ can compute $\mu\cdot \data = (A_{\ell}-\indvec_a)\cdot  \data$.

Then we construct the decoding vector $\decvec^{(u,a)}$: 
$$\decvec^{(u,a)} = [\lambda_1^{(\ell)}\dots \lambda_r^{(\ell)}\  \mu_1\dots \mu_m]$$

so that decoding is done by computing 
\[  \decvec^{(u,a)}\cdot  \left[\begin{array}{c} \beta \\\hline  \text{diag}(B_u)\end{array}\right]\cdot \data  = [\lambda^{(\ell)}_1,...,\lambda^{(\ell)}_r] \cdot  (\beta  \data)  - \mu\cdot \data=A_{\ell}\cdot \data - (A_{\ell}-\indvec_a)\data = \indvec_a \data. \]

\end{proof}
We next generalize Theorem~\ref{thm:centrminrank}  to EIC problems which are not single unicast:
\begin{theorem}\label{thm:rminrankopt}
Given an EIC problem $\probset$, corresponding problem graph $G$, and column repetition function $\phi_{\probset}$ with range $\Aprob$,
$$\Cp = \rminrk_2(G,\Aprob).$$
\end{theorem}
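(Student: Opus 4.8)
The plan is to prove the two inequalities $\Cp \le \rminrk_2(G,\Aprob)$ and $\Cp \ge \rminrk_2(G,\Aprob)$ separately, so that together they give the claimed equality.

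For the upper bound $\Cp \le \rminrk_2(G,\Aprob)$, I would take a matrix $A' \in \Aprob$ that fits $G$ and achieves the restricted minrank, so that $\rk_2(A') = \rminrk_2(G,\Aprob) =: r$. By definition of $\Aprob$ we may write $A' = \phi_{\probset}(A)$ for some $A \in \mathbb{F}_2^{|V|\times m}$, and by Remark~\ref{rem:preserverank} we have $\rk_2(A) = \rk_2(A') = r$. Then Lemma~\ref{lem:basisdecode} directly produces a centralized linear broadcast solution $A^{(r)}$ with exactly $r$ rows, so $\Cp \le r = \rminrk_2(G,\Aprob)$. This direction is essentially immediate from the lemma already established.

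For the lower bound $\Cp \ge \rminrk_2(G,\Aprob)$, I would start from an optimal centralized solution $\sendmat \in \mathbb{F}_2^{h\times m}$ with $h = \Cp$, together with its decoding vectors $\decvec^{(u,a)}$ for each $(u,a)$ with $R_{ua}=1$. The key step is to split each decoding vector as $\decvec^{(u,a)} = [\,\gamma^{(u,a)} \mid \delta^{(u,a)}\,]$ with $\gamma^{(u,a)} \in \mathbb{F}_2^{h}$ and $\delta^{(u,a)} \in \mathbb{F}_2^{m}$, matching the block structure of $[\sendmat\,;\,\mathrm{diag}(B_u)]$. The decoding identity $\indvec_a = \gamma^{(u,a)}\sendmat + \delta^{(u,a)}\mathrm{diag}(B_u)$ then lets me define the row $A_{(u,a)} := \gamma^{(u,a)}\sendmat = \indvec_a - \delta^{(u,a)}\mathrm{diag}(B_u)$, indexed by the vertex $v_{(u,a)}$, and assemble these into a matrix $A \in \mathbb{F}_2^{|V|\times m}$. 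Since every row of $A$ lies in the rowspace of $\sendmat$, we get $\rk_2(A) \le \rk_2(\sendmat) \le h$, and by Remark~\ref{rem:preserverank}, $\rk_2(\phi_{\probset}(A)) = \rk_2(A) \le h$.

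It remains to check that $A' := \phi_{\probset}(A)$ fits $G$; this is where the structure of the problem graph and the disjoint-support hypothesis $\supp(B)\cap\supp(R)=\emptyset$ come into play, and is the step I expect to require the most care. Reading off the $(v_{(u,a)}, v_{(w,b)})$ entry of $\phi_{\probset}(A)$ as the $b$-th coordinate of the row $A_{(u,a)} = \indvec_a - \delta^{(u,a)}\mathrm{diag}(B_u)$, I would verify: (i) the diagonal entry ($b=a$) equals $1$, because $R_{ua}=1$ forces $B_{ua}=0$ and hence the side-information term vanishes at coordinate $a$; and (ii) whenever $(v_{(u,a)}, v_{(w,b)}) \notin E$---that is, $a\ne b$ and $B_{ub}=0$---both $\indvec_a$ and the term $\delta^{(u,a)}\mathrm{diag}(B_u)$ vanish at coordinate $b$, so the entry is $0$. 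This is exactly the fitting condition, and since $A' \in \Aprob$ by construction, it witnesses $\rminrk_2(G,\Aprob) \le \rk_2(A') \le h = \Cp$. Combining the two inequalities yields the theorem. The main subtlety throughout is the bookkeeping between the $m$ data-block columns of $A$ and the $|V|$ vertex-indexed columns of $\phi_{\probset}(A)$---note that the edge status of $(v_{(u,a)}, v_{(w,b)})$ depends only on $(u,a)$ and the block $b$, not on $w$, so the column repetition is consistent with the fitting condition, which is precisely what the column-repetition function is designed to track.
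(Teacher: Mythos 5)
Your proposal is correct and follows essentially the same two-inequality argument as the paper: the upper bound via Lemma~\ref{lem:basisdecode} applied to a rank-minimizing fitting matrix in $\Aprob$, and the lower bound by splitting the decoding vectors of a length-$h$ solution to form rows $A_{(u,a)} = \indvec_a - \delta^{(u,a)}\mathrm{diag}(B_u)$ lying in the rowspan of the solution and then verifying that $\phi_{\probset}(A)$ fits $G$. Your explicit appeal to $\supp(B)\cap\supp(R)=\emptyset$ for the diagonal entries, and your remark that the edge status of $(v_{(u,a)},v_{(w,b)})$ depends only on $(u,a)$ and $b$, make precise two points the paper leaves implicit, but the route is the same.
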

\begin{proof}
Let $A'\in \Aprob$ be the matrix of lowest rank in $\Aprob$ that fits $G$ and let $r:=\rk_2(A') = \rminrk_2(G,\Aprob)$. Let $A \in \mathbb{F}_2^{|V|\times m}$ such that $\phi_{\probset}(A) = A'$. By Lemma~\ref{lem:basisdecode}, a matrix $A^{(r)}$ composed of $r$ linearly independent rows of $A$ is a centralized linear solution to $\probset$ of length $r$ (by the choice of $r$, the rowspan of $A^{(r)}$ equals the rowspan of $A$). Since a centralized source is able to construct each of these messages for this solution (that is, the rows of matrix $A^{(r)}\data$) we conclude that 
\[ \Cp\leq r = \rminrk_2(G,\Aprob). \]

For the other direction,
suppose that $Z \in \mathbb{F}_2^{s \times m}$ is a linear solution to $\probset$ for some $s \in \mathbb{Z}^+$.
Let $\mathbf{z}_i  \in \mathbb{F}_2^m$ denote the $i^{th}$ row of $Z$. 
We will show the row span of $Z$ contains the row span of some matrix $A$ such that $A':=\phi_{\probset}(A)$ fits  $G$. Consider some $(u,a)\in P$. By the definition of a linear solution, there exists some vector $\decvec^{(u,a)}$ such that 
$$\indvec_a = \decvec^{(u,a)}\cdot  \left[\begin{array}{c} Z \\\hline  \text{diag}(B_u)\end{array}\right].$$
Write 
$$\decvec^{(u,a)} = [\lambda_1^{(u,a)}\dots \lambda_s^{(u,a)}\  \mu_1\dots \mu_m]$$  
for some $\lambda_i^{(u,a)}, \mu_j \in \mathbb{F}_2$, so that 
$$\indvec_a = \sum_{i=1}^s \lambda_i^{(u,a)}\mathbf{z}_i+\sum_{j=1}^m \mu_jB_{uj}\indvec_j.$$
%
 Let $A_{(u,a)} \in \mathbb{F}_2^m$ be the vector
\[ A_{(u,a)} =  \indvec_a-\sum_{j=1}^m \mu_jB_{uj}\indvec_j = \sum_{j=1}^s \lambda_j^{(u,a)} \mathbf{z}_j.\] 
Then 
 $A_{(u,a)}$ is in the row span of $Z$, and moreover the $a$'th entry of $A_{(u,a)}$ satisfies
\[ A_{(u,a),a} = 1. \]
Additionally, for any block $b\in [m]$ with $b \neq a$ such that $B_{ub} =  0$, 
\[ A_{(u,a), b} = (\indvec_a)_b - \mu_b B_{ub} = 0. \]
Let $A \in \mathbb{F}_2^{|P| \times m}$ be the matrix whose rows are given by $A_{(u,a)}$ for $(u,a) \in P$.
Let 
 $A':=\phi_{\probset}(A)$.
We claim that $A'$ fits $G$. 
Indeed, we have for all $(u,a) \in P$ that
\[ A'_{(u,a),(u,a)} = A_{(u,a),a} = 1 \]
by the above, and so the first requirement of Definition~\ref{def:fitmatrix} is met.

To see the second requirement of Definition~\ref{def:fitmatrix}, first note that for all $b \neq a$, we have
\[ A'_{(u,a),(w,b)} = A_{(u,a),b} \] 
which by the above is non-zero only if $B_{ub} = 1$; that is only if there is an edge (of the ``first type'') from $v_{(u,a)}$ to $v_{(w,b)}$ in $G$.  
Second, there is always an edge (of the ``second type'') from $v_{(u,a)}$ to $v_{(w,a)}$.
Thus, the only non-zero off-diagonal entries of $A'$ correspond to edges in $G$, and the second requirement of Definition~\ref{def:fitmatrix} is satisfied.


Thus, for any linear solution $Z$ of length $s$, there is a matrix $A$ so that row span of $Z$ contains the row span of $A$ and so that $A' = \phi_{\probset}(A)$ fits $G$.  Thus,
\[ \Cp \geq s \geq \rk_2(A) = \rk_2(A') \geq \rminrk_2(G,\Aprob). \]
This completes the proof.

\end{proof}

Because the  minrank gives the optimal linear solution for a centralized sender with all data blocks, our definition of the problem graph is a natural extension of index coding and the side information graph to the embedded index coding model.  In the following it will be helpful to use the following theorem from~\cite{bar2006index} relating minrank to some other standard graph properties.  For a graph $G$, the \em chromatic number \em $\chi(G)$ is the minimum number of colors required to color the vertices of $G$ so that no neighboring vertices have the same color.  The \em clique number \em $\omega(G)$ is the size of the largest clique in $G$. The independence number, denoted $\alpha(G)$, is the set of the largest independent set in $G$, so  $\alpha(G) = \omega(\overline{G})$. 

\begin{theorem}[\cite{bar2006index}]\label{thm:minrkbounds}
$\omega(\overline{G})\leq \minrk_2(G)\leq  \chi(\overline{G})$.
\end{theorem}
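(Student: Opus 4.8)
The plan is to prove the two inequalities separately by translating each graph parameter into a statement about matrices that fit $G$. First I would set up the dictionary: $\omega(\overline{G}) = \alpha(G)$ is the size of a largest independent set of $G$, while a proper coloring of $\overline{G}$ with $\chi(\overline{G})$ colors is exactly a partition of $V(G)$ into $\chi(\overline{G})$ cliques of $G$ (a color class is an independent set of $\overline{G}$, hence a clique of $G$). So the claim is equivalent to $\alpha(G) \le \minrk_2(G) \le \kappa(G)$, where $\kappa(G)$ denotes the minimum number of cliques needed to cover $V(G)$. Each side is then handled by exhibiting, respectively, a large identity submatrix inside every fitting matrix, and one particular low-rank fitting matrix.

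For the lower bound $\alpha(G) \le \minrk_2(G)$, let $S \subseteq V$ be an independent set with $|S| = \alpha(G)$, and let $A$ be any matrix that fits $G$. Since no two distinct vertices of $S$ are joined by an edge, Definition~\ref{def:fitmatrix} forces $A_{k\ell} = 0$ for all distinct $k,\ell \in S$, while $A_{kk} = 1$; hence the principal submatrix of $A$ on the rows and columns indexed by $S$ is the identity $I_{|S|}$. Because the rank of a submatrix never exceeds the rank of the full matrix, $\rk_2(A) \ge |S| = \alpha(G)$. As $A$ was an arbitrary fitting matrix, minimizing over all of them gives $\minrk_2(G) \ge \alpha(G) = \omega(\overline{G})$.

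For the upper bound $\minrk_2(G) \le \kappa(G)$, I would take a partition of $V$ into cliques $C_1,\dots,C_t$ with $t = \kappa(G) = \chi(\overline{G})$ and build a single matrix $A$ by placing, for each $i$, an all-ones block on the rows and columns indexed by $C_i$ and zeros everywhere else. Each such block has rank $1$ over $\bF_2$ and the blocks occupy disjoint sets of rows and columns, so $\rk_2(A) \le t$. It then remains to check that $A$ fits $G$: the diagonal is all ones because every vertex lies in its own clique, and every nonzero off-diagonal entry $A_{k\ell}$ lies inside a single block $C_i$, so $k$ and $\ell$ share a clique and are therefore adjacent, i.e. $(k,\ell)\in E$. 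Hence $A$ fits $G$ and $\minrk_2(G) \le \rk_2(A) \le t = \chi(\overline{G})$.

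The step I expect to be the real obstacle is the bookkeeping forced by the fact that the problem graph $G$ is \emph{directed}. In the lower bound I must use independence in the strong sense that there is no edge in \emph{either} direction between two vertices of $S$, so that both $A_{k\ell}$ and $A_{\ell k}$ are forced to $0$ and the submatrix is genuinely $I_{|S|}$. In the upper bound, by contrast, the all-ones block is only legitimate if each clique $C_i$ is \emph{bidirectional}, i.e. every pair within $C_i$ has edges both ways in $E$, since otherwise some $A_{\ell k}=1$ would violate the fit condition. Thus $\alpha(\overline{G})$ and $\chi(\overline{G})$ must be read against consistent symmetrizations of $G$, and the bounds are cleanest (and the two symmetrizations coincide) exactly when $G$ is symmetric, which is the undirected setting of the original statement in~\cite{bar2006index}. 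Once the graph parameters are interpreted under this convention, both verifications reduce to the routine checks above.
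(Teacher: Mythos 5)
Your proof is correct and is the standard argument for this result, which the paper itself imports from \cite{bar2006index} without supplying a proof; in particular, the clique-partition construction you use for the upper bound is exactly the one the paper redeploys (via the characteristic vectors $\mathbf{c}^{(C)}$ of the cliques $C_1,\dots,C_{\chi(\overline{G})}$) when it extends the bound to $\rminrk_2$ in the subsequent corollary. Your attention to the directed-graph conventions --- independent sets having no edge in either direction for the $\omega(\overline{G})$ lower bound, and color classes of $\overline{G}$ corresponding to bidirectional cliques for the $\chi(\overline{G})$ upper bound --- is well placed and consistent with how the paper's corollary verifies the fit condition on its directed problem graphs.
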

These bounds also apply to our  restricted version of minrank:
\begin{cor}
Given an EIC problem $\probset$, a corresponding problem graph $G = (V,E)$, and the column repetition function $\phi_{\probset}$ with range $\Aprob$, we have:
$$\omega(\overline{G})\leq  \minrk_2(G) \leq \rminrk_2(G,\Aprob)\leq \chi(\overline{G}).$$
\end{cor}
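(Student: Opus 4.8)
The statement chains three inequalities, two of which come essentially for free and one of which carries the real content. The leftmost inequality $\omega(\overline{G})\leq\minrk_2(G)$ is exactly the lower bound supplied by Theorem~\ref{thm:minrkbounds}, so nothing new is required there. The middle inequality $\minrk_2(G)\leq\rminrk_2(G,\Aprob)$ is a pure consequence of the definitions: $\rminrk_2(G,\Aprob)$ minimizes $\rk_2$ over the matrices of $\Aprob$ that fit $G$, while $\minrk_2(G)$ minimizes the same quantity over \emph{all} matrices that fit $G$; since $\Aprob\subseteq\mathbb{F}_2^{|V|\times|V|}$, we are minimizing over a subset, which can only increase (or preserve) the minimum. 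I would also remark that the feasible set of the restricted minimization is nonempty — witnessed by the construction below — so $\rminrk_2(G,\Aprob)$ is well defined.

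The substantive step is the rightmost inequality $\rminrk_2(G,\Aprob)\leq\chi(\overline{G})$. Here the plan is to produce a low-rank matrix in $\Aprob$ that fits $G$. The obstacle is that the textbook proof of $\minrk_2(G)\leq\chi(\overline{G})$ builds an ``all-ones within each clique'' matrix, which need not lie in $\Aprob$ (columns indexed by distinct vertices requesting the same block need not be equal). To sidestep this, I would build the witness on the block side, as a matrix $A\in\mathbb{F}_2^{|V|\times m}$, and then set $A'=\phi_{\probset}(A)$, so that membership in $\Aprob$ is automatic and, by Remark~\ref{rem:preserverank}, $\rk_2(A')=\rk_2(A)$. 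Concretely, take a proper coloring of $\overline{G}$ with $t=\chi(\overline{G})$ colors, i.e.\ a partition of $V(G)$ into $t$ cliques $C_1,\dots,C_t$ of $G$ (sets of pairwise bi-adjacent vertices). For each clique $C_j$ define a single row $\rho_j\in\mathbb{F}_2^m$ by $\rho_{j,b}=1$ iff some $v_{(w,b)}\in C_j$ requests block $b$, and set $A_{(u,a)}=\rho_j$ for every $v_{(u,a)}\in C_j$. Since $A$ has at most $t$ distinct rows, $\rk_2(A')=\rk_2(A)\leq t$.

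It then remains to verify that $A'$ fits $G$. The diagonal condition holds because $A'_{(u,a),(u,a)}=A_{(u,a),a}=\rho_{j,a}=1$, as the vertex $v_{(u,a)}\in C_j$ itself requests $a$. For the off-diagonal condition, consider a non-edge $(v_{(u,a)},v_{(w,b)})\notin E$, which by Definition~\ref{def:probgraph} means $B_{ub}=0$ and $a\neq b$; then $A'_{(u,a),(w,b)}=A_{(u,a),b}=\rho_{j,b}$, and the key point is that $\rho_{j,b}=1$ would force some $v_{(x,b)}\in C_j$, whence bi-adjacency inside the clique yields the edge $(v_{(u,a)},v_{(x,b)})\in E$, i.e.\ $B_{ub}=1$ or $a=b$ — contradicting $B_{ub}=0,\ a\neq b$. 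Hence $\rho_{j,b}=0$ and $A'$ fits $G$, giving $\rminrk_2(G,\Aprob)\leq\rk_2(A')\leq t=\chi(\overline{G})$. The delicate part is precisely this last verification: defining rows to be constant on each clique simultaneously places $A'$ in $\Aprob$ (forcing equal same-block columns) \emph{and} introduces no forbidden off-diagonal entry, and it is the within-clique adjacency that guarantees both at once. This is where the argument departs from the plain minrank upper bound, and is the step I would take most care over.
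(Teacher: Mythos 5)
Your proof is correct and follows essentially the same route as the paper's: the first two inequalities are dispatched identically (Theorem~\ref{thm:minrkbounds} plus minimization over a subset), and your witness for the last inequality --- one row per clique of a $\chi(\overline{G})$-clique-partition, supported on the blocks requested within that clique, pushed through $\phi_{\probset}$ --- is exactly the paper's construction, with the fitting check merely phrased contrapositively (non-edges give zero entries) rather than directly (non-zero entries give edges).
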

\begin{proof}
Since $\rminrk_2$ is a minimization over a smaller set of matrices than $\minrk_2$, clearly $\minrk_2(G) \leq \rminrk_2(G,\mathcal{A})$. 
Thus 
\[ \omega( \overline{G} ) \leq \minrk_2(G) \leq \rminrk_2(G,\mathcal{A}) \]
follows from Theorem~\ref{thm:minrkbounds}.
Using a similar approach as in~\cite{bar2006index}, we show the final inequality by describing a matrix $A' \in \mathcal{A}$ such that $\rk_2(A') \leq \chi(\overline{G})$.  

By the definition of chromatic number, there is a partition of $V$ into sets $C_1, \ldots, C_{\chi(\overline{G})}$ so that each $C_i$ forms a clique in $G$.
Let $C\subseteq V$ be a clique from such a partition. 
Define a vector $\mathbf{c}^{(C)} \in \mathbb{F}_2^m$ so that the $a$'th entry of $\mathbf{c}^{(C)}$ is given by
\[ c_a^{(C)} = \begin{cases} 1 & \exists u \in [n] \text{ so that } v_{(u,a)} \in C \\ 0 & \text{else} \end{cases} \]


Now, define a matrix $A \in \mathbb{F}_2^{|P| \times m}$ with rows indexed by elements of $P$, so that 
if $v_{(u,a)}$ is in the clique $C$, then
\[ A_{(u,a)} = \mathbf{c}^{(C)}. \]
 Let $A' = \phi_{\probset}(A)$.  Thus
\[ \rk_2(A') = \rk_2(A) \leq \chi(\overline{G}) \]
since there are only $\chi(\overline{G})$ distinct rows of $A$.

Now we just need to show that $A'$ fits $G$. 
 Consider some row $A'_{(u,a)}$, where $A_{(u,a)} = {\bf c}^{(C)}$ for some clique $C$, and choose some $(w,b)\neq (u,a)$ such that $A_{(u,a)(w,b)}=1$. If $a=b$, then $(v_{(u,a)},v_{(w,b)})\in E$ is an edge of the ``second type.''
On the other hand, if $a \neq b$, then by the definition of $\phi_{\probset}$, $c_b^{(C)} = 1$, so there exists some $x \in [n]$ so that $v_{(x,b)} \in C$.  Since $C$ is a clique, 
$(v_{(u,a)},v_{(x,b)})\in E$. By the definition of problem graph this is an edge of the ``first type,'' so $B_{ub}=1$, so we also have $(v_{(u,a)},v_{(w,b)})\in E$.  Thus any non-zero off-diagonal entry of $A$ corresponds to an edge in $G$.
Moreovoer, the  diagonal entries of $A'$ are 
\[ A'_{(u,a),(u,a)} = A_{(u,a),a)} = c_a^{(C)} \]
where $v_{(u,a)} \in C$, so this is $1$ from the definition of $c_a^{(C)}$.
Thus, $A'$ fits $G$.
\end{proof}

\subsection{Cost of Decentralization: $\D \leq 2\C$}

It can easily be  seen that $\C \leq \D$, that is, that the minimum length of a decentralized embedded index code is at least the minimum length of a centralized solution.
Indeed, the $\D$ messages transmitted in the decentralized solution can all be constructed by a centralized source which has access to all data blocks.  Thus we are interested in how much larger $\D$ can be than $\C$.
In fact, we show that it is no more than a factor of $2$ worse:
\begin{theorem}\label{thm:2xdecentr}
Given an EIC problem $\probset$, $\Dp \leq 2\cdot \Cp$.
\end{theorem}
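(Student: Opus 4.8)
The plan is to start from an optimal centralized solution, whose messages are rows of a matrix $A$ as produced by Lemma~\ref{lem:basisdecode}, and to argue that each such message can be reproduced in the decentralized model using exactly two broadcasts, each made by a node that is allowed to make it. Summing over the $r=\Cp$ centralized messages then gives $\Dp\le 2\Cp$.

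First I would invoke Theorem~\ref{thm:rminrankopt} and Lemma~\ref{lem:basisdecode} to fix an optimal centralized solution: a matrix $A$ with $A'=\phi_{\probset}(A)$ fitting $G$, together with $r=\rk_2(A)=\Cp$ rows $A_{(u_1,a_1)},\dots,A_{(u_r,a_r)}$ that span the rowspace of $A$, so that the centralized messages are $A_{(u_i,a_i)}\cdot\data$. The key structural observation, which is the crux of the whole argument, is that because $A'$ fits $G$, each chosen row satisfies $A_{(u_i,a_i),a_i}=1$ and $\supp(A_{(u_i,a_i)})\subseteq\{a_i\}\cup\supp(B_{u_i})$. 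In words, apart from the single ``target'' block $a_i$, the entire support of the $i$th centralized message lies inside the side information of the single node $u_i$. This confinement of the off-target support to one node is exactly what lets us avoid a factor of $n$ and obtain a factor of $2$.

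Given this, I would build the decentralized solution as follows. For each $i$, let node $u_i$ broadcast the ``local part'' $\tilde{M}_i:=(A_{(u_i,a_i)}-\indvec_{a_i})\cdot\data$; this is a legal broadcast for $u_i$ under Definition~\ref{def:generalsolve} since $\supp(A_{(u_i,a_i)}-\indvec_{a_i})\subseteq\supp(B_{u_i})$. Separately, for each $i$, let some node that holds block $a_i$ broadcast the raw block $\data_{a_i}$, which is legal for that node. This yields $2r$ broadcasts in total. Any node $w$ can then recover each original centralized message via $A_{(u_i,a_i)}\cdot\data=\tilde{M}_i+\data_{a_i}$ (over $\bF_2$), and hence can run exactly the centralized decoding procedure of Lemma~\ref{lem:basisdecode}, which uses only the messages $A_{(u_i,a_i)}\cdot\data$ and $w$'s own side information, to recover every requested block. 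This is a valid decentralized solution of length $2r=2\Cp$, establishing the bound.

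The only point requiring care is the existence, for each $i$, of a node that holds $a_i$ and can broadcast $\data_{a_i}$. This holds automatically under the standing assumption that the problem admits a decentralized solution at all: since $(u_i,a_i)\in P$, block $a_i$ is requested by $u_i$, and if no node held $a_i$ then $\data_{a_i}$ could never appear in any decentralized broadcast, so $u_i$ could never decode it and $\Dp$ would be infinite (making the claim vacuous); equivalently, one assumes every requested block is stored on at least one node. I expect this bookkeeping to be the only subtlety, as the entire conceptual content sits in the structural observation that the off-target support of each centralized message is contained in a single node's holdings.
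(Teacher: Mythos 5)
Your proof is correct and follows essentially the same route as the paper's: for each of the $r=\Cp$ spanning rows $A_{(u,a)}$, the fitting property confines the off-target support to $\supp(B_u)$, so node $u$ broadcasts $(A_{(u,a)}-\indvec_a)\cdot\data$ and a holder of block $a$ broadcasts $\indvec_a\cdot\data$, and summing the two reproduces the centralized message. Your explicit justification that some node holds each requested block $a_i$ (else $\Dp$ is infinite and the claim is vacuous) is a small detail the paper leaves implicit, but it does not change the argument.
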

\begin{proof}

Let $P$ be the set of requirement pairs for $\probset$.
Let $G = (V,E)$ be the problem graph for $\probset$ and let $\phi_{\probset}$ be the corresponding column expansion function, with image $\Aprob$. By Theorem~\ref{thm:rminrankopt},  $\Cp = \rminrk_2(G,\Aprob)$. 
Let $A' \in \Aprob$ be a matrix with $A' =\phi_{\probset}(A)$ for some $A \in \mathbb{F}_2^{|V| \times m}$ so that
\[ \rk_2(A') = \rk_2(a) = \Cp =: r \]
and so that $A'$ fits $G$.
By Lemma~\ref{lem:basisdecode}, there is a matrix $A^{(r)}$ with rows $A_1, \ldots, A_r$ that is a centralized linear broadcast solution to $\probset$. We will show how to simulate this centralized solution using only $2r$ messages.

Since $A_1, \ldots, A_r$ are rows of $A$, they correspond to requirement pairs in $P$.  Fix $\ell \in [r]$ and suppose that $A_\ell$ corresponds to $(u,a) \in P$.   Since $A'$ fits $G$, the diagonal entries of $A'$ are non-zero.  This means that $A_{\ell, a} \neq 0$.  Further, for $b \neq a$, if $A_{\ell, b} \neq 0$ then there is an edge of the ``first type'' in $G$: that is, $B_{ub} = 1$, which means that node $u$ has block $b$.
%
%
Thus, node $u$ is able to compute
$$\sum_{b:B_{ub} = 1}A_{\ell b} \indvec_b \cdot \data = A_\ell\cdot \data - \indvec_a \cdot \data.$$ 

The decentralized scheme is then as follows.  For each $\ell \in [r]$ corresponding to $(u,a)$, we have two broadcasts:
\begin{enumerate}
\item Node $u$ broadcasts $\sum_{b:B_{ub} = 1}A_{\ell b} \indvec_b \cdot \data$.  That is, $A_{\ell}$ is a row of $\beta^{(u)}$.
\item Fix any other node $w$ so that $B_{wa} = 1$.  Then node $w$ broadcasts $\indvec_a \cdot \data$.  That is, $\indvec_a$ is a row of $\beta^{(w)}$.
\end{enumerate}

Now every node can add together the two broadcasts corresponding to $\ell \in [r]$ to obtain $A_\ell \cdot \data$.  
Since $A^{(r)}$ is a linear centralized solution to $\probset$, this scheme is a linear centralized solution to $\probset$.
\end{proof}
We note that the proof of Theorem~\ref{thm:2xdecentr} crucially uses the EIC formulation; this shows why considering EIC separately as a special case of multi-sender index coding can be valuable.

\subsection{Cost of Task-Based Solutions: Upper Bound for $\T$}

 We first show how the minrank can be used to re-formulate the length $\T$ of the optimal task-based solution. 
Let $(R,B)$ be an EIC problem, with problem graph $G = (V,E)$.
Recall from Definition~\ref{def:nborpartition} that, given a task-based solution $T$, the neighborhood partition $\{\N_1, \ldots, \N_n\}$ is a partition of $V$ so that $\N_w \subseteq N_w$ is the set of vertices $v_{(u,a)}$ so that $w$ is responsible for $(u,a)$ in $T$.  

For $\N_w \subseteq N_w$ corresponding to a task-based solution $T$, let $G|_{\N_w}$ denote the induced subgraph of $G$ on the vertices $\N_w$. 
As per Remark~\ref{rem:icminiproblem}, each $\N_w$ corresponds to an EIC problem $(R^{(w)}, B^{(w)})$, over the set of blocks $\{a \in [m]:\exists u\in[n]\text{ s.t. } v_{(u,a)} \in \N_w\}$. Thus by definition of $\N_w$, node $w$ has all blocks used in problem $(R^{(w)}, B^{(w)})$ and any centralized solution to $(R^{(w)}, B^{(w)})$ can be broadcast by $w$. Note that such a solution can easily be used as a self-contained part of a solution to the problem $(R,B)$ with the full set of $m$ blocks. To do so, we just insert zeros in encoding and decoding vectors for blocks in $[m]$ not used in $(R^{(w)}, B^{(w)})$. Then the messages of the solution to $(R^{(w)}, B^{(w)})$ can be used by vertices of $\N_w$ as in the subproblem. 

 We first show how solutions to these subproblems can be used as building blocks for task-based solutions.
\begin{lemma}\label{lem:ctotb}
Let $G$ be a problem graph for EIC problem $(R,B)$. Let $\{\N_1,....,\N_n\}$ be a neighborhood partition. Let $G|_{\N_w}$ be the subgraph of $G$ induced by $\N_w$  and let $(R^{(w)},B^{(w)})$ be the problem with problem graph $G|_{\N_w}$ for all $w \in [n]$. Then any set of solutions $\{\beta^{(w)} \in \mathbb{F}^{h_w\times m}: w \in [n], h_w \in \mathbb{Z}^+\}$ for problems $\{(R^{(w)},B^{(w)}):w \in[n]\}$ forms a task-based solution to $(R,B)$  with length $\sum_{i=1}^n h_i$.
\end{lemma}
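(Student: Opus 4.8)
The plan is to verify directly that the combined collection $\{\beta^{(w)}\}_{w\in[n]}$ satisfies the two requirements of Definition~\ref{def:generalsolve} (so that it is a legitimate linear broadcast solution), together with the single-sender decoding requirement of Definition~\ref{def:tasksoln} (so that it is moreover \emph{task-based}), and finally to read off the length. Each of these is a check against the relevant definition; no genuinely new construction is needed, since the subproblems $(R^{(w)},B^{(w)})$ and their solutions $\beta^{(w)}$ are given to us.

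First I would check the column-support condition. By construction the subproblem $(R^{(w)},B^{(w)})$ only involves the block set $S_w = \{a : \exists u \text{ with } v_{(u,a)}\in \N_w\}$, and we regard $\beta^{(w)}$ as having zero columns outside $S_w$. For each $a\in S_w$ there is some $u$ with $v_{(u,a)}\in\N_w\subseteq N_w$, and by Definition~\ref{def:sendernborhood} this forces $B_{wa}=1$. Hence every nonzero column of $\beta^{(w)}$ is indexed by a block that node $w$ has, so node $w$ can legitimately compute and broadcast $\beta^{(w)}\cdot\data$, and the first bullet of Definition~\ref{def:generalsolve} holds.

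Next I would assign responsibilities using the partition. Since $\{\N_1,\dots,\N_n\}$ is a neighborhood partition (Definition~\ref{def:nborpartition}), each vertex $v_{(u,a)}$, equivalently each $(u,a)\in P$, lies in exactly one part $\N_\ell$; I declare this $\ell$ responsible for $(u,a)$. Because $v_{(u,a)}\in\N_\ell$, the pair $(u,a)$ is a requirement pair of the subproblem $(R^{(\ell)},B^{(\ell)})$, and since $\beta^{(\ell)}$ solves that subproblem, Definition~\ref{def:centrsoln} supplies a decoding vector $\decvec^{(u,a)}$ with $\indvec_a = \decvec^{(u,a)}\cdot[\beta^{(\ell)};\mathrm{diag}(B^{(\ell)}_u)]$. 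The key observation is that node $u$'s side information $B^{(\ell)}_u$ in the subproblem is the restriction of $B_u$ to $S_\ell$, so every block $u$ may use to decode in the subproblem is also available to it in $(R,B)$. Padding $\decvec^{(u,a)}$ with zeros in the coordinates indexing blocks outside $S_\ell$ therefore yields a vector $\decvec_\ell^{(u,a)}\in\mathbb{F}_2^{h_\ell+m}$ with $\indvec_a = \decvec_\ell^{(u,a)}\cdot[\beta^{(\ell)};\mathrm{diag}(B_u)]$, which is exactly the decoding condition of Definition~\ref{def:tasksoln}, using only the broadcast of the single responsible node $\ell$.

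Finally, collecting these facts shows that $\{\beta^{(w)}\}$ is a task-based solution to $(R,B)$, and its length is simply the total number of symbols broadcast, namely $\sum_{w=1}^n h_w$. I expect the only delicate point to be the bookkeeping in the third paragraph: confirming that the subproblem $(R^{(w)},B^{(w)})$ extracted from the induced subgraph $G|_{\N_w}$ really has block set contained in the blocks node $w$ has, and that each receiver's side information in the subproblem is the restriction of its side information in the original problem, so that subproblem decoding vectors lift back to $(R,B)$ by zero-padding. This is what makes the single-sender decoding legitimate, but it follows from the correspondence set up in Remark~\ref{rem:icminiproblem} and the paragraph preceding the lemma rather than requiring any new idea.
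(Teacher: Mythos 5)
Your proposal is correct and follows essentially the same route as the paper: assign each $(u,a)\in P$ to the unique part $\N_\ell$ containing $v_{(u,a)}$, use the subproblem solution $\beta^{(\ell)}$ and its decoding vector (zero-padded back to the full block set) to satisfy the single-sender condition of Definition~\ref{def:tasksoln}, and note that $\N_\ell\subseteq N_\ell$ guarantees node $\ell$ has every block appearing in its subproblem so the broadcast is legal. The paper delegates the zero-padding and side-information-restriction bookkeeping to the paragraph preceding the lemma, whereas you carry it out explicitly; the content is the same.
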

\begin{proof}
For each vertex $v_{(u,a)}$ of $G$ there is some $\N_w$ such that $v_{(u,a)} \in \N_w$. Let $\beta^{(w)}$ be the centralized linear broadcast solution to EIC problem $(R^{(w)},B^{(w)})$, where $\beta^{(w)} \in \mathbb{F}^{h_w\times m}$ for some $h_w \in \mathbb{Z}^+$. Then there exists some $\decvec^{(u,a)}$ such that $\indvec_a = \decvec^{(u,a)} \cdot \left[\begin{array}{c}\sendmat^{(w)} \\\hline  \text{diag}(B_u) \end{array}\right]$. Since there is such a  vertex $v_{(u,a)}$ for each $(u,a) \in P$, all requests in $P$ are satisfied in this way by some $\beta^{(w)}\cdot \data$. By definition of $(R^{(w)},B^{(w)})$, each $\beta^{(w)}\cdot \data$ for $w \in[n]$ can be broadcast by node $w$.
 Thus $\beta^{(1)},...,\beta^{(n)}$ forms a task-based solution to $(R,B)$ with length $\sum_{i=1}^n h_i$.

\end{proof}
We can then compute the length of an optimal task-based solution, $\T$, in terms of neighborhood partitions.
\begin{lemma}\label{lem:neighborminrank}
Given an EIC problem  $\probset$, let $\mathscr{N}$ be the set of all possible neighborhood partitions (as in Definition~\ref{def:nborpartition}). 
For $\{\N_1,...,\N_n\} \in\mathscr{N}$, let $(R^{(w)},B^{(w)})$ be the EIC problem induced by $\N_w$.
Then
$$\displaystyle\Tp = \min_{\{\N_1,...,\N_n\} \in\mathscr{N}} \sum_{i=1}^n \rminrk_2(G|_{\N_i},\mathcal{A}_{(R^{(i)},B^{(i)})}).$$
 \end{lemma}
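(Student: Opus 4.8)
The plan is to prove the equality $\Tp = \min_{\{\N_1,\dots,\N_n\}\in\sN}\sum_{i=1}^n \rminrk_2(\Gi,\mathcal{A}_{(R^{(i)},B^{(i)})})$ by establishing two inequalities, using the machinery already built in Lemma~\ref{lem:ctotb} and Theorem~\ref{thm:rminrankopt}. The two directions are essentially dual: one shows that an optimal partition yields a task-based solution of the claimed length, and the other shows that any task-based solution induces a partition whose cost is at least $\T$.

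For the direction $\Tp \le \min_{\{\N_i\}}\sum_i \rminrk_2(\Gi,\mathcal{A}_{(R^{(i)},B^{(i)})})$, I would fix an arbitrary neighborhood partition $\{\N_1,\dots,\N_n\}\in\sN$ achieving the minimum on the right-hand side. Each $\N_w$ induces the subproblem $(R^{(w)},B^{(w)})$ with problem graph $\Gi$ (for $i=w$), and by Theorem~\ref{thm:rminrankopt} applied to this subproblem, there is a centralized linear broadcast solution $\sendmat^{(w)}$ to $(R^{(w)},B^{(w)})$ of length exactly $\rminrk_2(G|_{\N_w},\mathcal{A}_{(R^{(w)},B^{(w)})})$. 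Since node $w$ has all blocks used in the subproblem (by the definition of the sender neighborhood), node $w$ can broadcast this solution. Then Lemma~\ref{lem:ctotb} assembles these $\sendmat^{(1)},\dots,\sendmat^{(n)}$ into a task-based solution to $(R,B)$ of total length $\sum_{i=1}^n \rminrk_2(\Gi,\mathcal{A}_{(R^{(i)},B^{(i)})})$. Taking the minimum over all partitions gives the inequality.

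For the reverse direction $\Tp \ge \min_{\{\N_i\}}\sum_i\rminrk_2(\Gi,\mathcal{A}_{(R^{(i)},B^{(i)})})$, I would start from an optimal task-based solution $\sendmat^{(1)},\dots,\sendmat^{(n)}$ of length $\Tp = \sum_{i=1}^n h_i$. By the discussion following Definition~\ref{def:nborpartition}, this solution induces a neighborhood partition $\{\N_1,\dots,\N_n\}$ where $\N_w$ is the set of $v_{(u,a)}$ for which $w$ is responsible. The key observation is that, restricted to the vertices of $\N_w$, the matrix $\sendmat^{(w)}$ together with the decoding vectors $\decvec^{(u,a)}_w$ constitutes a \emph{centralized} linear broadcast solution to the subproblem $(R^{(w)},B^{(w)})$: the task-based decoding condition in Definition~\ref{def:tasksoln} is exactly the centralized decoding condition of Definition~\ref{def:centrsoln} for the subproblem. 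Hence $h_w \ge \Cp[(R^{(w)},B^{(w)})] = \rminrk_2(\Gi,\mathcal{A}_{(R^{(w)},B^{(w)})})$ by Theorem~\ref{thm:rminrankopt} applied to the subproblem. Summing over $w$ gives $\Tp = \sum_w h_w \ge \sum_w \rminrk_2(\Gi,\mathcal{A}_{(R^{(i)},B^{(i)})}) \ge \min_{\{\N_i\}}\sum_i\rminrk_2(\Gi,\mathcal{A}_{(R^{(i)},B^{(i)})})$.

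The main obstacle I anticipate is the reverse direction, specifically verifying cleanly that each block $\sendmat^{(w)}$ of an optimal task-based solution really is a valid centralized solution to exactly the induced subproblem $(R^{(w)},B^{(w)})$ with its own (smaller) block set—one must check that the decoding only uses messages from the single responsible node $w$ (guaranteed by the task-based definition) and that restricting to the blocks appearing in the subproblem, rather than all $m$ blocks, does not lose any information needed for decoding. This requires carefully matching the support conventions of the subproblem's $B^{(w)}$ against the original $B_u$ and confirming that $\rminrk_2$ is indeed the right quantity via Theorem~\ref{thm:rminrankopt}. Everything else is bookkeeping that the two previously established results handle.
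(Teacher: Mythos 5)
Your proposal is correct and follows essentially the same two-inequality structure as the paper's proof: the upper bound via Lemma~\ref{lem:ctotb} and Theorem~\ref{thm:rminrankopt} applied to the minimizing partition, and the lower bound by extracting the induced neighborhood partition from an optimal task-based solution and observing that each sender's broadcast is a centralized solution to its induced subproblem (the paper phrases this last step as a contradiction, but it is the same argument as your direct bound $h_w \ge \rminrk_2(G|_{\N_w},\mathcal{A}_{(R^{(w)},B^{(w)})})$).
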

 \begin{proof}
We first show that $\Tp \leq  \min_{\{\N_1,...,\N_n\} \in\mathscr{N}} \sum_{i=1}^n \rminrk_2(G|_{\N_i},\mathcal{A}_{(R^{(i)},B^{(i)})})$. Consider the neighborhood partition $\{\N_1,...,\N_n\}$ which minimizes $ \sum_{i=1}^n \rminrk_2(G|_{\N_i},\mathcal{A}_{(R^{(i)},B^{(i)})})$. A possible task based solution $T$ can be constructed by optimally solving the centralized index coding problem $(R^{(w)},B^{(w)})$ defined by each $G|_{\N_w}$ with sending node $w$, as shown in Lemma~\ref{lem:ctotb}. By Theorem~\ref{thm:rminrankopt}, each centralized subproblem solution has length $\rminrk_2(G|_{\N_w}, \mathcal{A}_{(R^{(w)},B^{(w)})})$, so the total length of $T$ is $\sum_{i=1}^n \rminrk_2(G|_{\N_i},\mathcal{A}_{(R^{(i)},B^{(i)})})$.

Next we show $\min_{\{\N_1,...,\N_n\} \in\mathscr{N}} \sum_{i=1}^n \rminrk_2(G|_{\N_i},\mathcal{A}_{(R^{(i)},B^{(i)})}) \leq \Tp$. Let $T$ be the optimal task-based solution, with length $\Tp$. Construct $\{\N_1,...,\N_n\}$ so that
\[ \N_w = \left\{ v_{(u,a)} \,:\, \text{$w$ is responsible for $(u,a)$ in $T$} \right\}. \]

By the definition of a task-based solution, each vertex is assigned to exactly one such $\N_w$, so we have a neighborhood partition $\{\N_1,...,\N_n\}\in \mathscr{N}$.  The centralized index coding problems $(R^{(w)},B^{(w)})$ for each  $w \in [n]$ have problem graphs $G|_{\N_w}$ and optimal solutions of length $\rminrk_2(G|_{\N_w},\mathcal{A}_{(R^{(w)},B^{(w)})})$ (Theorem~\ref{thm:rminrankopt}). If $\Tp < \sum_{i=1}^n \rminrk_2(G|_{\N_i},\mathcal{A}_{(R^{(i)},B^{(i)})})$, then the solution to $(R^{(w)},B^{(w)})$ for some $w \in [n]$ must have length strictly less than $\rminrk_2(G|_{\N_w},\mathcal{A}_{(R^{(w)},B^{(w)})})$. This contradicts Theorem~\ref{thm:rminrankopt}. Thus $\Tp \geq \min_{\{\N_1,...,\N_n\} \in\mathscr{N}} \sum_{i=1}^n \rminrk_2(G|_{\N_i},\mathcal{A}_{(R^{(i)},B^{(i)})}).$

  \end{proof}

The following upper bound on $\Tp$ follows from 
Lemma~\ref{lem:neighborminrank} and Theorem~\ref{thm:minrkbounds}.
  \begin{lemma}\label{lem:nborupper}
  Given an EIC problem defined by $\probset$, let $\mathscr{N}$ be the set of all possible neighborhood partitions (Definition~\ref{def:nborpartition}).  Then
\begin{equation}\label{eq:nborupper}
\displaystyle\Tp  \leq  \min_{\{\N_1,...,\N_n\}\in\mathscr{N}} \sum_{i=1}^n \chi(\overline{G}|_{\N_i}).
\end{equation}
 \end{lemma}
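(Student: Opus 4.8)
The plan is to combine the exact characterization of $\Tp$ from Lemma~\ref{lem:neighborminrank} with a term-by-term upper bound on the restricted minrank of each block of a neighborhood partition. By Lemma~\ref{lem:neighborminrank},
\[ \Tp = \min_{\{\N_1,\dots,\N_n\}\in\mathscr{N}} \sum_{i=1}^n \rminrk_2(G|_{\N_i}, \mathcal{A}_{(R^{(i)},B^{(i)})}), \]
so it suffices to bound each summand by $\chi(\overline{G}|_{\N_i})$ and then pass to the minimizing partition.

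First I would recall that, as noted in Remark~\ref{rem:icminiproblem} and in the discussion preceding Lemma~\ref{lem:ctotb}, each induced subgraph $G|_{\N_i}$ is precisely the problem graph of the EIC subproblem $(R^{(i)},B^{(i)})$, and its associated column repetition function has image $\mathcal{A}_{(R^{(i)},B^{(i)})}$. Hence the corollary following Theorem~\ref{thm:minrkbounds}---whose rightmost inequality states that $\rminrk_2(H,\mathcal{A}) \leq \chi(\overline{H})$ for a problem graph $H$ with repetition image $\mathcal{A}$---applies to each subproblem and yields $\rminrk_2(G|_{\N_i}, \mathcal{A}_{(R^{(i)},B^{(i)})}) \leq \chi(\overline{G|_{\N_i}})$. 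Since complementation commutes with passing to an induced subgraph, $\overline{G}|_{\N_i} = \overline{G|_{\N_i}}$, so this is exactly the bound I want.

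Then for any fixed partition I would sum these inequalities over $i$, and finally specialize to the partition $\{\N_i^\ast\}$ achieving the minimum on the right-hand side of \eqref{eq:nborupper}. Using that $\Tp$ is a minimum over all partitions,
\[ \Tp \;\le\; \sum_{i=1}^n \rminrk_2(G|_{\N_i^\ast}, \mathcal{A}_{(R^{(i)},B^{(i)})}) \;\le\; \sum_{i=1}^n \chi(\overline{G}|_{\N_i^\ast}) \;=\; \min_{\{\N_1,\dots,\N_n\}\in\mathscr{N}} \sum_{i=1}^n \chi(\overline{G}|_{\N_i}). \]
This finishes the argument. There is no genuine obstacle here; the only point requiring care is to invoke the restricted-minrank upper bound from the corollary (which builds a rank-$\chi(\overline{H})$ matrix explicitly inside $\mathcal{A}$) rather than the unrestricted bound $\minrk_2(G)\leq \chi(\overline{G})$ of Theorem~\ref{thm:minrkbounds}, since the latter controls $\minrk_2$ but not the possibly larger quantity $\rminrk_2$ appearing in Lemma~\ref{lem:neighborminrank}.
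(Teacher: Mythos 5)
Your proof is correct and follows essentially the same route the paper intends: the paper derives this lemma directly from the characterization in Lemma~\ref{lem:neighborminrank} combined with the $\chi(\overline{G})$ upper bound on (restricted) minrank. Your closing remark is a worthwhile clarification--the bound that is actually needed is the corollary's inequality $\rminrk_2(G,\Aprob)\leq\chi(\overline{G})$ applied to each subproblem $(R^{(i)},B^{(i)})$ with problem graph $G|_{\N_i}$, not the unrestricted bound of Theorem~\ref{thm:minrkbounds} itself, even though the paper's one-line justification cites only the latter.
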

 
 In Section~\ref{sec:tapprox}, we will use Lemma~\ref{lem:nborupper} to develop algorithms to approximate the optimal neighborhood partition (in the sense that the right hand side of \eqref{eq:nborupper} is minimized), by reducing the problem to the minimum cover problem.
 
 \subsection{An example where $\C \neq \D \neq  \T$}

\begin{figure}
	\centering  
	\subfloat[{\label{fig:graphexample}}]{\includegraphics[width=.26\textwidth]{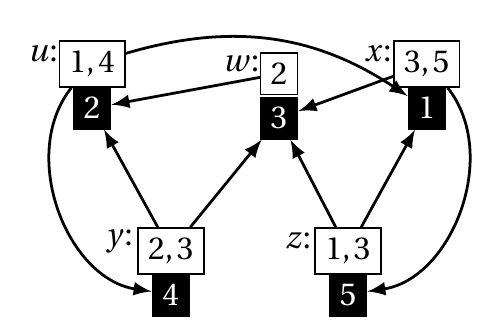}}
	\hspace{1cm}
		\subfloat[{\label{fig:gexampleneighbor}}]{\includegraphics[width=.45\textwidth]{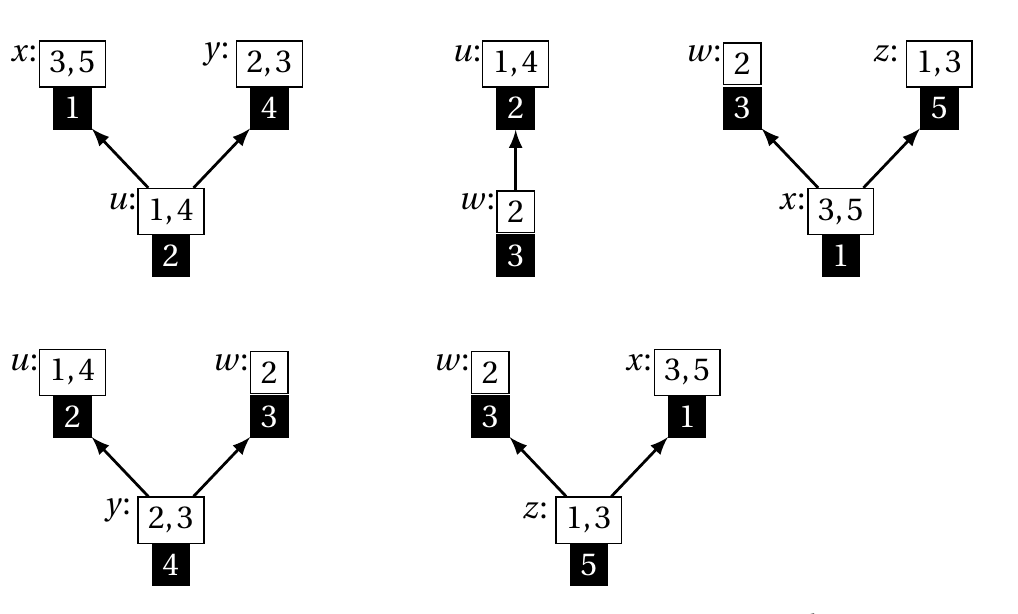}}
	\caption{ (a) Example of a problem graph for a single unicast EIC problem that has $\C < \D < \T$. Each pair of boxes is a vertex in the problem graph, where black boxes contain indices of requested data blocks and the white boxes contain indices of side information blocks; the label $u,w,x,y,z$ indicates which node the vertex corresponds to. (b) The subgraphs induced by each node along with its sender neighborhood.}  
\end{figure}

Figure~\ref{fig:graphexample} is an example of an EIC problem $\probset$ for which $\C < \D < \T$.  First consider $\C$: by inspection, a central source  with  all data blocks could send messages $\data_2\oplus \data_4$, $\data_3$,  and $\data_1\oplus \data_5$ so that all five nodes can decode their requested block, but no combination of fewer messages suffices.  Thus, $\C = 3$. 

Next consider $\D$. A solution of minimum-length is: node $y$  broadcasts $\data_2\oplus \data_3$, node $z$ broadcasts $\data_3 \oplus \data_1$, node $u$ broadcasts $\data_4$, and node $x$ broadcasts $\data_5$. It can be checked that this is indeed a minimum-length solution. Thus, $\D = 4$. 

Finally, consider $\T$. Then out-neighborhoods of each node, as shown in Figure~\ref{fig:gexampleneighbor}, are the subgraphs over which we can apply index coding (Remark~\ref{rem:icminiproblem}). In particular, we construct  the neighborhood partition from these subgraphs (Definition~\ref{def:nborpartition}). Since the graph induced by each neighborhood is acyclic, as shown in~\cite{tahmasbi2015critical} there is no way to do any non-trivial coding in any subgraph to a code shorter than the uncoded solution. Thus any task based solution requires that all blocks be broadcast uncoded.  Since there are five blocks that need to be sent, we have $\T = 5$.

\subsection{Separations between $\T$ and $\C$}

 We next give a condition on the problem graph $G$ which guarantees that $\T$ is strictly larger than $\C$, with a gap as big as the gaps from the graph-theoretic minrank bounds. 
 \begin{lemma}\label{lem:chigap}
Given an EIC problem defined by $\probset$ and corresponding problem graph  $G$, If $(\chi(G)-1)\chi(\overline{G})<|V|$ then there is an  optimal task-based solution with neighborhood partition $\{\N_1,...,\N_n\}$ so that
$$\Cp \leq \chi(\overline{G}) < \sum_{i=1}^n \omega(\overline{G}|_{\N_i}) \leq \T_{\probset}$$
\end{lemma}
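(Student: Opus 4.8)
The plan is to pick an optimal task-based solution, read off its neighborhood partition $\{\N_1,\dots,\N_n\}$ via Lemma~\ref{lem:neighborminrank}, and verify the displayed chain for that partition. Two of the three inequalities are immediate from results already established; essentially all of the difficulty is concentrated in the strict middle inequality $\chi(\overline{G}) < \sum_{i=1}^n \omega(\overline{G}|_{\N_i})$.

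For the outer inequalities: the left one, $\Cp \le \chi(\overline{G})$, is exactly Theorem~\ref{thm:rminrankopt} (which gives $\Cp = \rminrk_2(G,\Aprob)$) together with the upper bound $\rminrk_2(G,\Aprob)\le\chi(\overline{G})$ from the corollary to Theorem~\ref{thm:minrkbounds}. For the right one, I take the partition achieving the minimum in Lemma~\ref{lem:neighborminrank}, so that $\Tp = \sum_i \rminrk_2(G|_{\N_i},\mathcal{A}_{(R^{(i)},B^{(i)})})$; applying the lower bound $\omega(\overline{G}|_{\N_i}) \le \minrk_2(G|_{\N_i}) \le \rminrk_2(G|_{\N_i},\mathcal{A}_{(R^{(i)},B^{(i)})})$ from Theorem~\ref{thm:minrkbounds} to each induced subgraph and summing gives $\sum_i \omega(\overline{G}|_{\N_i}) \le \Tp$. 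Neither of these steps uses the hypothesis.

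For the middle inequality I would argue by contradiction and counting. Suppose $S := \sum_i \omega(\overline{G}|_{\N_i}) \le \chi(\overline{G})$. Since $\omega(\overline{G}|_{\N_i}) = \alpha(G|_{\N_i})$ is the independence number of the induced subgraph $G|_{\N_i}$, the elementary bound $|V(H)| \le \chi(H)\,\alpha(H)$ together with $\chi(G|_{\N_i}) \le \chi(G)$ gives $|\N_i| \le \chi(G)\,\omega(\overline{G}|_{\N_i})$ for each part; summing over the partition yields $|V| \le \chi(G)\cdot S \le \chi(G)\chi(\overline{G})$. Equivalently, fixing an optimal $\chi(G)$-coloring of $G$ with classes $I_1,\dots,I_{\chi(G)}$, each class meets every part $\N_i$ in an independent set of $G|_{\N_i}$, so $|I_\ell| \le S$ for every $\ell$ and hence $|V| = \sum_\ell |I_\ell| \le \chi(G)\,S$. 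The target is to sharpen this to $|V| \le (\chi(G)-1)\chi(\overline{G})$, which would contradict the hypothesis and force $S \ge \chi(\overline{G})+1$.

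I expect this sharpening to be the main obstacle. The crude counting only delivers $S \ge |V|/\chi(G) > \chi(\overline{G})\bigl(1-\tfrac{1}{\chi(G)}\bigr)$, which falls just short of $\chi(\overline{G})$ and is not closed by integrality alone; recovering the final $\tfrac{1}{\chi(G)}\chi(\overline{G})$ is exactly what the precise constant $(\chi(G)-1)$ in the hypothesis is calibrated for. To get it I would exploit that an optimal task solution cannot collapse to a single part (otherwise $\Tp=\Cp$ and there is no gap to exhibit), so more than one color class participates nontrivially, and then account carefully for how the $\chi(G)$ color classes distribute across the parts so as to discard the contribution of one class and trade the factor $\chi(G)$ for $\chi(G)-1$. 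Getting this bookkeeping exactly right, rather than the routine outer inequalities, is where the real work lies.
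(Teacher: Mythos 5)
Your handling of the two outer inequalities is correct and consistent with the paper: $\Cp \le \chi(\overline{G})$ follows from Theorem~\ref{thm:rminrankopt} together with the corollary to Theorem~\ref{thm:minrkbounds}, and taking the partition achieving the minimum in Lemma~\ref{lem:neighborminrank} and applying $\omega(\overline{G}|_{\N_i}) \le \minrk_2(G|_{\N_i}) \le \rminrk_2(G|_{\N_i},\mathcal{A}_{(R^{(i)},B^{(i)})})$ termwise gives the right-hand bound. However, you have correctly diagnosed --- and not closed --- the gap in the strict middle inequality, which is the entire content of the lemma. The missing ingredient is not a refined bookkeeping of how color classes distribute across parts, nor the observation that the partition has more than one nonempty part; it is a structural property of sender neighborhoods. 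By Definition~\ref{def:sendernborhood}, $N_i \subseteq \cap_a N^{+}(v_{(i,a)})$, so every vertex of $\N_i \subseteq N_i$ is adjacent in $G$ to the common vertex $v_{(i,a)}$ for any block $a$ requested by node $i$. Hence in a proper $\chi(G)$-coloring of $G$, the color of $v_{(i,a)}$ is never used on $\N_i$, which yields $\chi(G|_{\N_i}) \le \chi(G)-1$ for \emph{every} part of \emph{any} neighborhood partition.

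With that per-part improvement in hand, the elementary bound $\omega(\overline{H}) = \alpha(H) \ge |V(H)|/\chi(H)$ gives $\sum_{i} \omega(\overline{G}|_{\N_i}) \ge \sum_i |\N_i|/(\chi(G)-1) = |V|/(\chi(G)-1) > \chi(\overline{G})$, the last step being exactly the hypothesis $(\chi(G)-1)\chi(\overline{G}) < |V|$. Your proposed substitute cannot recover this: knowing only that more than one part is nonempty does not let you replace $\chi(G|_{\N_i}) \le \chi(G)$ by $\chi(G|_{\N_i}) \le \chi(G)-1$, and without that replacement the counting stalls at $S \ge |V|/\chi(G)$, which, as you note yourself, falls short of $\chi(\overline{G})$. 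So the proposal is incomplete at precisely the one inequality that requires the EIC-specific structure of the problem graph.
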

\begin{proof}
Let $V:=V(G)$.
First note that for any graph $G$, $\alpha(G) = \omega(\overline{G})$  and $\frac{|V|}{\chi(G)}  \leq \alpha(G)$ (see, e.g.,~\cite{west1996introduction}). Consider coloring each graph $G|_{\N_i}$ (induced by an element of the neighborhood partition) individually, compared to coloring  all of $G$ at once. Since every node in $\N_i$ shares a neighbor in $G$ (i.e. any of  the vertices $v_{(i,a)} \in V$ for some $a\in[m]$) there is a color in the minimum coloring of $G$ not necessary to color $G|_{\N_i}$. Thus $\chi(\N_i)\leq  \chi(G)-1$. Putting these steps together:

 $$ \sum_{i=1}^n \omega(\overline{G}|_{\N_i}) \geq  \sum_{i=1}^n \frac{|V(\Gi)|}{\chi(\Gi)} \geq \sum_{i=1}^n \frac{|V(\Gi)|}{\chi(G)-1} = \frac{|V|}{\chi(G)-1} $$
Since by assumption we have
  $\frac{|V|}{\chi(G)-1}> \chi(\overline{G})$, this proves the claim. Note that $ \sum_{i=1}^n \omega(\overline{G}|_{\N_i}) \geq  \sum_{i=1}^n \frac{|V(\Gi)|}{\chi(\Gi)} $ follows from applying $\omega(\overline{G}) \geq \frac{|V|}{\chi(G)}$ to each induced graph $\Gi$. Additionally, $\sum_{i=1}^n |V(\Gi)| = |V|$ leads to the equality on the right because the graphs in the set $\{\Gi:i \in [n]\}$ are induced by corresponding elements of the vertex partition $\{\N_1,...,\N_n\}$.
\end{proof}

Lemma~\ref{lem:chigap} establishes a gap between $\Cp$ and $\T_{\probset}$ whenever 
$|V| > (\chi(G)-1)(\chi(\overline{G}))$, so we note here a few small graphs which illustrate how this may or may not be the case.
First, we note that for all
graphs it is true that $|V| \leq \chi(G)\chi(\overline{G})$. 
For cliques and graphs consisting of multiple disconnected cliques, $|V| =  \chi(G)\chi(\overline{G}) > (\chi(G)-1)(\chi(\overline{G}))$ so Lemma~\ref{lem:chigap} establishes a gap. On the other hand, for (directed) cycles, Lemma~\ref{lem:chigap} does not establish a gap: $\chi(G) = 2$ or $3$ and $\chi(\overline{G})  = n$ so $(\chi(G)-1)(\chi(\overline{G}))\not < n$.

\section{Algorithms}\label{sec:algos}

In this section, we use results from the previous section to design two heuristics for finding good EIC solutions.  We also demonstrate empirically that our algorithms perform well.  

First, we use Theorem~\ref{thm:2xdecentr} to give an algorithm which produces an EIC solution that is optimal within a factor of two.  We show empirically that our algorithm is faster (more precisely, has a smaller search space) than the algorithm of \cite{li2018multi}.  We note that our algorithm is tailored for EIC while the approach of \cite{li2018multi} works more generally in the multi-sender model.  This demonstrates the value of focusing on EIC as a special case.

Second, we use Lemma~\ref{lem:nborupper} to give a heuristic algorithm to design a task-based scheme for an EIC problem.  We show empirically that the quality of solution returned by our algorithm is within a small constant factor (at most $1.4$ in our experiments) of the optimal \em centralized \em scheme.

We describe both of these in more detail below.

\subsection{Approximating $\D$}\label{sec:approxd}
The proof of Theorem~\ref{thm:2xdecentr} gives an algorithm to approximate the optimal decentralized solution to an EIC problem,
which we detail in Algorithm~\ref{alg:2xdecentr}.
Algorithm~\ref{alg:2xdecentr} first computes the exact optimal centralized solution with length $\Cp$ and then uses the transformation outlined in the proof of Theorem~\ref{thm:2xdecentr} to arrive at a decentralized solution with length at most $2 \Cp$.  We note that in practice the optimal centralized solution could also be approximated, leading to a decentralized solution of length at most twice the cost of the approximation. 

\begin{algo}\label{alg:2xdecentr}
Given an EIC problem $\probset$:
\begin{enumerate}
\item Construct the problem graph $G$
\item Find $A \in \mathbb{F}_2^{|V|\times m}$ such that $\rk_2(A)= \rk_2(\phi_{\probset}(A)) = \rminrk_2(G,\Aprob)$\label{step:minrk}
\label{alg2x:minrkstep} 
\begin{enumerate}
\item Let $r:=rk_2(A)$
\item Let $A_1,...,A_r$ be linearly independent rows of $A$
\end{enumerate}

\item For each $A_{\ell} \in\{A_1...A_r\}$:
\begin{enumerate}
\item Let $\ell = (u,a)$
\item Node $u$: compute and broadcast $\sum_{\substack{k = (w,a); k\in[|V|]: B_{ui} = 1}} A_{\ell k} \data_a $ 
\item For each node $w$ s.t. $B_{(w,a)} = 1$: compute and broadcast $(\sum_{k = (w,a);w \in [n]} A_{\ell k})\data_a$
\end{enumerate}
\end{enumerate}
\end{algo}

We compare Algorithm~\ref{alg:2xdecentr}
 to the algorithms in previous work~\cite{li2018multi,kim2019linear}, which apply more generally to any multi-sender index coding problem but can in particular be applied to EIC. Since the algorithms of~\cite{li2018multi,kim2019linear} only apply to EIC problems in which each node is requesting a single block, we restrict our analysis to that case. In this case, applying $\phi_{\probset}$ has no effect (that is, $A = \phi_{\probset}(A)$), so the computation of $\rminrk_2(G,\Aprob)$ in step~\ref{step:minrk} of Algorithm~\ref{alg:2xdecentr} is equivalent to computing $\minrk_2(G)$.

To compare the complexity of these algorithms, we observe that the main computational task of both methods is computing the minrank of a graph, by searching over a set of possible fitting matrices. 
In practice, we may wish to use a heuristic to approximate the minrank; however, one way to compare the speed of these algorithms is to compute the size of the search space that would be required to compute the minrank exactly.  As we describe below, the search space for Algorithm~\ref{alg:2xdecentr} is much smaller than that for the other algorithms.  (We note that if the minrank is computed exactly, then Combined LT-CMAR approach of previous work becomes an exact algorithm, while our algorithm is a two-approximation.)

In~\cite{kim2019linear}, the set of possible matrices are those that fit a constraint matrix $C$: a matrix $M$ of the same dimensions as $F$ is a possible solution if $F_{ij}=0 \to M_{ij} = 0$. The number of linearly independent rows of such a solution $M$ is the corresponding solution size, so the goal is to find the $M$ of minimum rank. Let $n_k=|B_k|_1$, i.e. the number of blocks node $k$ has. In this approach,  the search space of matrices that fit the requirements to decode is of size
\[ 2^{\sum_{k=1}^n \sum_{k'=1}^n |R_k \odot (B_k'+R_k')|_1}\leq 2^{\sum_{k=1}^n n\cdot n_k}. \]

In~\cite{li2018multi}, the minimization problem is over a smaller search space of matrices, but with additional constraints. As in other work, a solution matrix $M$ represents the requested blocks and side information of each node and the goal is to minimize $\rk_2(M)$. Additionally, $M$ must be in the rowspan of a matrix $C$, which represents the blocks available at each node to use in constructing messages. Using Gaussian elimination, the submatrices of $C$ corresponding to each sender node are altered to maximize repeated rows in $C$ as a whole. Then letting $n^{(c)}_k$ be the number of such redundant rows, the search space for solution matrices is of size
$$2^{\sum_{k=1}^n (n_k^2+n_k)/2-\sum_{k=1}^n( (n_k^{(c)})^2+n_k^{(c)})/2}.$$ 

A heuristic method, the Combined LT-CMAR procedure, for finding some of such redundancies is provided in~\cite{li2018multi}. 

In contrast,
Algorithm~\ref{alg:2xdecentr}, in particular step~\ref{alg2x:minrkstep}, requires searching over $2^{\sum_{k=1}^n n_k}$ matrices. 

Figure~\ref{fig:searchratio} shows how the base-2 logarithm of the search space of Algorithm~\ref{alg:2xdecentr} compares to that of the Combined LT-CMAR procedure of~\cite{li2018multi}. The larger the ratio, the more costly the Combined LT-CMAR algorithm is relative  to our EIC heuristic.  Values are computed on Erd\H{o}s-Renyi graphs  randomly generated with various values of $n$, the number of vertices, and $p$, the probability of each directed edge existing. Note that graphs are re-sampled for each trial until one is generated such that every node has an out-degree of at least one. This is done because a node without an out-edge cannot satisfy any requirements with messages from the other nodes, so it has to be dropped from the problem, reducing $n$. Except for the smallest values of $n$ and $p$, $S_{\model}$ is smaller than $S_{LT-CMAR}$, meaning that Algorithm~\ref{alg:2xdecentr} has a smaller search space than the combined LT-CMAR algorithm. 

As shown in Figure~\ref{fig:searchratiooverp}, the ratios go down in some cases as the edge probability approaches $1$, because denser graphs create more similarities in the neighborhoods of nodes for the Combined LT-CMAR procedure to leverage into search space reduction. However, as shown in Figure~\ref{fig:searchratioovern}, for fixed edge probabilities not close to $1$, the logarithm of the search space for our algorithm grows relative to the logarithm of the search space for Combined LT-CMAR.

\begin{figure}
	\centering  
	\subfloat[{\label{fig:searchratioovern}}]{\includegraphics[width=.24\textwidth]{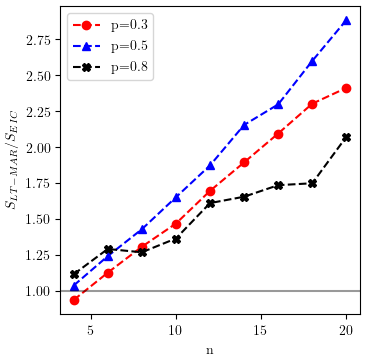}}
		\subfloat[{\label{fig:searchratiooverp}}]{\includegraphics[width=.24\textwidth]{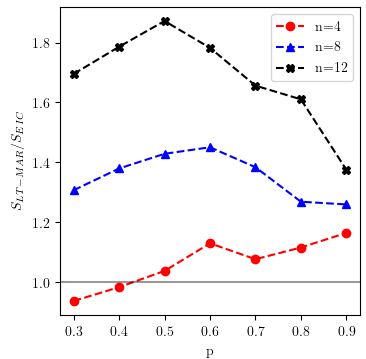}}
	\caption{$S_{LT-CMAR}$ is the value $\log_2$ of  the search space size  for a given $\probset$  using the method of~\cite{li2018multi}. $S_{\model}$  is the value $\log_2$ of the search space size using our method of approximating with the corresponding centralized solution. Ratios $S_{LT-CMAR}/S_{\model}$ are plotted for  the averages over sets of  $20$ Erd\H{o}s-Renyi graphs with the given number of vertices $n$ and probability $p$  for each directed edge. When $S_{LT-CMAR}/S_{\model}>1$ our algorithm has a strictly smaller search space. }\label{fig:searchratio}
\end{figure}


\subsection{Approximating $\T$}\label{sec:tapprox}

Computing a task-based solution consists of two main steps: finding a neighborhood partition (Definition~\ref{def:nborpartition}) and finding an index coding solution to the task defined by each $\N_i$ for sender node  $i$. 
Our heuristic uses Lemma~\ref{lem:nborupper} to approximate an
optimal choice for a neighborhood partition.  In order to see how, we define the \em neighborhood-cliques \em associated with an EIC problem:
\begin{definition}
Given a problem graph $G$ for some EIC problem $\probset$ with sender node neighborhoods $N_1,...,N_n$, let the set of \emph{neighborhood-cliques} be $\sC := \{V(C): C\text{ is a  maximal clique in }G|_{N_i} \text{ for some }N_i\}$
\end{definition}
We first define the min-cover  and min-clique-cover  problems. Let $U = \{u_1,...,u_n\}$ be a set of $n$ elements. Let $S = \{X_1,...,X_m$\} be subsets of $U$, i.e. $X_j \subseteq U$ for each $j \in[m]$, such that for all $i \in [n]$, $u_i$ is in some $X_j$. The min-cover problem over $(U,S)$ is to find the smallest $\{X_{j_1},...,X_{j_k}\}\subseteq S$, such that $\bigcup_{\ell = 1}^kX_{j_\ell} = U$. The min-clique-cover problem over a graph $G$ is an instance $(U,S)$ of min-cover in which $U=V(G)$ and $S$ is the set of all cliques in $G$, including non-maximal cliques and single vertices.

Using neighborhood-cliques, solving for the chromatic numbers used to upper bound the length of a task-based solution reduces to min-cover: 
\begin{theorem}\label{thm:mincolorreduction}
Given an EIC problem $\probset$ and the corresponding problem graph $G$, solving for the neighborhood partition $\N_1,...\N_n$ to minimize $\sum_{i=1}^n \chi(\overline{G}|_{\tilde{N}_i})$ is exactly equivalent to the min-cover problem over vertices of $G$ with sets $\sC=\{V(C_i): C_i \text{ is a  maximal clique in }G|_{N_i}\}$.
\end{theorem}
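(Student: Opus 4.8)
The plan is to prove the stated equivalence by showing that the two optimization problems have equal optimal values, using as a bridge the standard identity that for any graph $H$ the quantity $\chi(\overline{H})$ equals the \emph{clique cover number} of $H$, i.e. the minimum number of cliques of $H$ whose union is $V(H)$. (A proper coloring of $\overline{H}$ is exactly a partition of $V(H)$ into cliques of $H$, and any cover by cliques can be pruned to a partition of the same size.) Applying this to each part, $\sum_{i=1}^n \chi(\overline{G}|_{\N_i})$ equals the minimum total number of cliques used when each $\N_i$ is covered by cliques of $G|_{\N_i}$. Since $\N_i \subseteq N_i$, every such clique is also a clique of $G|_{N_i}$ and therefore sits inside some maximal clique of $G|_{N_i}$, i.e. inside some element of $\sC$. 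This containment is what lets me translate between the neighborhood-partition problem and the min-cover problem over $(V(G),\sC)$.

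For the inequality $\mathrm{mincover}(V(G),\sC) \le \min_{\sN}\sum_{i=1}^n \chi(\overline{G}|_{\N_i})$, I would start from an optimal neighborhood partition $\{\N_1,\dots,\N_n\}$. For each $i$ take a minimum clique cover of $G|_{\N_i}$, of size $\chi(\overline{G}|_{\N_i})$, and replace each clique in it by a maximal clique of $G|_{N_i}$ that contains it (whose vertex set lies in $\sC$). The union over all $i$ of these maximal cliques is a subcollection of $\sC$ covering $V(G)$ (the $\N_i$ partition $V(G)$ and each $\N_i$ is covered), of cardinality at most $\sum_{i=1}^n \chi(\overline{G}|_{\N_i})$, with duplicates only shrinking the count.

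For the reverse inequality, I would begin with an optimal cover $\{C_1,\dots,C_k\}\subseteq\sC$ of $V(G)$, recording with each $C_\ell$ a sender node $i_\ell$ for which $C_\ell$ is a maximal clique of $G|_{N_{i_\ell}}$, so that $C_\ell\subseteq N_{i_\ell}$. First I resolve overlaps by assigning each vertex of $G$ to exactly one clique covering it, producing disjoint sets $D_\ell\subseteq C_\ell$ that partition $V(G)$; each $D_\ell$ is still a clique of $G$ contained in $N_{i_\ell}$. Then I set $\N_i = \bigcup_{\ell : i_\ell = i} D_\ell$. The three conditions of Definition~\ref{def:nborpartition} hold immediately (containment in $N_i$, pairwise disjointness, and covering $V(G)$), so this is a valid neighborhood partition, and since $\N_i$ is a union of $|\{\ell : i_\ell = i\}|$ cliques we obtain $\chi(\overline{G}|_{\N_i}) \le |\{\ell : i_\ell = i\}|$. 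Summing over $i$ bounds $\sum_{i=1}^n \chi(\overline{G}|_{\N_i})$ by $k = \mathrm{mincover}(V(G),\sC)$, and combining the two inequalities gives the claim.

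The main obstacle I anticipate is bookkeeping rather than deep mathematics: carefully managing the mismatch between \emph{covers} (overlaps allowed) in the min-cover formulation and \emph{partitions} (disjoint parts) in the neighborhood-partition formulation, and ensuring that when I lift a clique of $G|_{\N_i}$ to a maximal clique of $G|_{N_i}$, or conversely split an element of $\sC$ and reassign its vertices by sender node, the containment $\N_i\subseteq N_i$ and the disjointness are both preserved so that a genuine neighborhood partition results. I would also flag the standing assumption that every vertex $v_{(u,a)}$ lies in some $N_i$ (some node holds block $a$), which guarantees both that a cover of $V(G)$ by $\sC$ exists and that the $\N_i$ can exhaust $V(G)$.
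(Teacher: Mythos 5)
Your proof is correct and follows essentially the same route as the paper's, which establishes the two inequalities via Lemmas~\ref{lem:nsetstocliques} and~\ref{lem:cliquestonsets} using the same bridge between $\chi(\overline{H})$ and the minimum number of cliques of $H$ covering $V(H)$. If anything, your treatment of the cover-to-partition direction is more careful than the paper's: the paper's construction $\N_i = \bigcup\{C_j \in \cC : C_j \subseteq N_i\}$ does not explicitly resolve the case where a clique of the cover lies inside several sender neighborhoods (which could violate disjointness of the $\N_i$), whereas you assign each clique to a single sender and each vertex to a single clique before forming the partition.
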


\begin{lemma}\label{lem:nsetstocliques}
Given the neighborhood partition $\{\N_i: i \in [n]\}$ for some $\probset$ with problem graph $G$, there exists a cover $\cC$ of $V(G)$ chosen from elements of $\sC$ (the set of neighborhood-cliques) such that 
\[ \sum_{C_j \in \cC} \chi(\overline{C_j}) = \sum_{i=1}^n \chi (\overline{G}|_{\N_i}).\]
\end{lemma}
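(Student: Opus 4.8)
The plan is to exploit the standard identity that, for any induced subgraph $H$, the chromatic number of the complement $\chi(\overline{H})$ equals the \emph{clique-cover number} of $H$ --- the least number of cliques of $H$ whose union is $V(H)$. Applying this to each part of the partition, $\chi(\overline{G}|_{\N_i}) = \chi(\overline{\Gi})$ is exactly the minimum number of cliques of $\Gi$ needed to cover $\N_i$. Thus each $\N_i$ decomposes into $k_i := \chi(\overline{\Gi})$ cliques $S_{i,1},\ldots,S_{i,k_i}$ of $\Gi$.

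Next I would promote each of these cliques to a neighborhood-clique. Since $\N_i \subseteq N_i$ by part 1 of Definition~\ref{def:nborpartition}, every $S_{i,j} \subseteq \N_i$ is also a clique of $G|_{N_i}$, hence is contained in some \emph{maximal} clique $\hat{C}_{i,j}$ of $G|_{N_i}$; by definition $\hat{C}_{i,j} \in \sC$. Collecting these, I set $\cC := \{\hat{C}_{i,j} : i \in [n],\ j \in [k_i]\}$, a subcollection of $\sC$.

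It remains to verify the two claims. First, $\cC$ covers $V(G)$: since $\{\N_1,\ldots,\N_n\}$ is a partition we have $\bigcup_i \N_i = V(G)$, and $\N_i = \bigcup_j S_{i,j} \subseteq \bigcup_j \hat{C}_{i,j}$, so $\bigcup_{i,j}\hat{C}_{i,j} \supseteq V(G)$. Second, each $\hat{C}_{i,j}$ is a clique of $G$, so the subgraph it induces is complete and its complement is edgeless, giving $\chi(\overline{\hat{C}_{i,j}}) = 1$. Hence $\sum_{C_j \in \cC}\chi(\overline{C_j})$ equals the number of cliques used, namely $\sum_{i=1}^n k_i = \sum_{i=1}^n \chi(\overline{\Gi})$, as required.

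The one point needing care --- and the main obstacle --- is making this count exact rather than merely an upper bound. Because $\cC$ is a collection of vertex sets, two of the cliques $\hat{C}_{i,j}$ could coincide and collapse, shrinking the left-hand sum. Within a single $\N_i$ this cannot occur: if $S_{i,j}$ and $S_{i,j'}$ extended to the same maximal clique of $G|_{N_i}$, then $S_{i,j}\cup S_{i,j'}$ would be a subset of $\N_i$ that is pairwise adjacent in $G$, hence a clique of $\Gi$, and replacing the two parts by their union would contradict the minimality of the clique cover $k_i$. Across different parts one should either index $\cC$ by the pairs $(i,j)$ (treating it as the intended subcollection of $\sC$ with multiplicity), or simply note that only the inequality $\sum_{C_j \in \cC}\chi(\overline{C_j}) \le \sum_{i=1}^n \chi(\overline{\Gi})$ is needed for the min-cover reduction of Theorem~\ref{thm:mincolorreduction}, whose reverse direction then supplies equality overall.
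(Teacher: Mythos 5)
Your proof follows essentially the same route as the paper's: decompose each $\N_i$ via a minimum clique cover of $G|_{\N_i}$ (equivalently, the color classes of a minimum coloring of $\overline{G}|_{\N_i}$), promote each piece to a maximal clique of $G|_{N_i}$ lying in $\sC$, and observe that each clique contributes $1$ to the left-hand sum. Your closing remark about possible coincidences among the chosen maximal cliques flags a point the paper's proof silently skips, and your resolution---that only the inequality $\le$ is actually needed to make the reduction in Theorem~\ref{thm:mincolorreduction} go through---is the right one.
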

\begin{proof}
Take some $\overline{\N_i}$ and consider a minimum coloring using $\chi(\overline{G}|_{\N_i})$ colors. Each set of vertices with a shared color is by definition a clique in $\Gi$, call such a clique  $C_j$. Since $\N_i\subseteq N_i$, $C_j$ (or a larger clique containing $C_j$ if $C_j$ is not maximal) is in $\sC$. We can apply this to all $\N_i$ to get a cover of $V(G)$. Since we create such a $C_j$ for each color used for $\N_i$, and the complement of a clique is 1-colorable, 
\[ \sum_{j: C_j \text{used for }\N_i}\chi(\overline{C_j})=|\{j: \text{used for }\N_i\}| = \chi(\overline{G}|_{\N_i}). \]

\end{proof}

\begin{lemma}\label{lem:cliquestonsets}
Given an EIC problem $\probset$ with problem graph $G$ and  clique cover $\cC := \{C_j\} \subseteq \sC$, there  is a corresponding choice of neighborhood partition $\N_1,...,\N_n$ such  that 
\[ \sum\chi(\overline{C_j}) \geq \sum \chi(\overline {G}|_{\tilde{N}_i}). \]
\end{lemma}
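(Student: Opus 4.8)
The plan is to build a neighborhood partition directly from the clique cover $\cC$ by charging each vertex to a single cover-clique (and hence to a single sender), and then to bound the coloring cost of each $\N_i$ by the number of cover-cliques charged to it. The elementary fact I will lean on throughout is the one already used in Lemma~\ref{lem:nsetstocliques}: any clique $C_j$ has an edgeless complement, so $\chi(\overline{C_j}) = 1$. Consequently $\sum_j \chi(\overline{C_j}) = |\cC|$, and it suffices to exhibit a neighborhood partition with $\sum_i \chi(\overline{G}|_{\N_i}) \leq |\cC|$.

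First I would record a sender for each cover-clique. By the definition of $\sC$, each $C_j \in \cC$ is a maximal clique in $G|_{N_{i(j)}}$ for some node $i(j) \in [n]$; fix one such $i(j)$ per $j$. In particular $C_j \subseteq N_{i(j)}$ and $C_j$ is a clique of $G$. Next I would form the partition: since $\cC$ covers $V(G)$, every vertex $v$ lies in at least one $C_j$, so I assign $v$ to the sender $i(j)$ for one arbitrarily chosen such $j$, and let $\N_i$ be the set of vertices assigned to $i$. This yields a genuine neighborhood partition in the sense of Definition~\ref{def:nborpartition}: each vertex is assigned exactly once (giving disjointness and covering of $V(G)$), and $\N_i \subseteq N_i$ because a vertex is charged to $i$ only through a clique $C_j \subseteq N_i$.

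Then I would bound each term. For fixed $i$, the tie-breaking partitions $\N_i$ into the blocks $C_j' := \{ v \in C_j : v \text{ charged to } i \text{ via } C_j \}$ ranging over the $j$ with $i(j) = i$. Each $C_j'$ is a subset of the clique $C_j$, hence itself a clique of $G$, so it is an independent set in $\overline{G}$; coloring $\N_i$ according to which block a vertex lies in therefore gives a proper coloring of $\overline{G}|_{\N_i}$ with $|J_i|$ colors, where $J_i := \{ j : i(j) = i,\ C_j' \neq \emptyset \}$. Hence $\chi(\overline{G}|_{\N_i}) \leq |J_i|$. Summing over $i$, and using that each clique $C_j$ is charged to the single sender $i(j)$ so that the $J_i$ are disjoint index subsets, I get $\sum_i |J_i| \leq |\cC|$, and therefore $\sum_i \chi(\overline{G}|_{\N_i}) \leq |\cC| = \sum_j \chi(\overline{C_j})$, as claimed.

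I expect the only delicate point to be the bookkeeping in the tie-breaking: a vertex may belong to several cover-cliques, and a clique may be maximal in several neighborhoods, so I must commit to a single charge per vertex (and a single sender $i(j)$ per clique) \emph{before} forming the $\N_i$. Otherwise the $\N_i$ could overlap, destroying both the partition property and the clique-partition-per-sender argument. Everything else reduces to the routine observation that a partition of $\N_i$ into $k$ cliques yields a proper $k$-coloring of its complement.
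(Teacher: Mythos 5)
Your proof is correct and follows essentially the same route as the paper: build the $\N_i$ from the cover-cliques and bound $\chi(\overline{G}|_{\N_i})$ by the number of cliques assigned to sender $i$, using that each clique is an independent set in the complement. In fact your explicit tie-breaking (one sender per clique, one clique per vertex) is more careful than the paper's own argument, which defines $\N_i$ as the union of \emph{all} $C_j \subseteq N_i$ and thus glosses over both the disjointness required by Definition~\ref{def:nborpartition} and the potential double-counting of a clique contained in several neighborhoods.
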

\begin{proof}
For each $i$, let $\N_i:=\bigcup \{C_j \in \cC: C_j \subseteq N_i\}$. Now letting $c :=|\{C_j \in \cC: C_j \subseteq N_i\}|$, we can color $\overline{G}|_{\N_i}$ with $c$ colors, so $\chi(\overline{G}|_{\N_i}) \leq c = \sum \chi(\overline{C_j})$.
\end{proof}

The proof of Theorem~\ref{thm:mincolorreduction} follows  immediately from Lemmas~\ref{lem:nsetstocliques} and~\ref{lem:cliquestonsets}. This also gives us an algorithm for $\N_1,...,\N_n$. Since min-cover is NP-hard solving for these will be as well,  but we can use existing min-cover approximation algorithms. Below, Algorithm~\ref{alg:tbapprox} computes the neighborhood partition which minimizes $\sum_{i=1}^n \chi(\overline{G}|_{\N_i})$ and the length of the minimum task-based solution using that partition.
\begin{algo}\label{alg:tbapprox}
Given an $EIC$ problem $\probset$:
\begin{enumerate}
\item Construct the problem graph $G$
\item Let $\sC = \emptyset$
\item For each $v_i\in V(G)$
\begin{enumerate}
\item Let $N_i\subset V(G)$ be the out-neighborhood of $v_i$
\item Compute $G|_{N_i}$, the subgraph induced by $N_i$
\item Compute the set of maximal cliques in $G|_{N_i}$ and add each to $\sC$
\end{enumerate}
\item Compute min clique cover $\cC$ of $\sC$
\item Let $T = 0$
\item For each $v_i\in V(G)$
\begin{enumerate}
\item Let $\N_i = \bigcup \{C_j \in \cC: C_j \subseteq N_i\}$
\item Compute $\Gi$, the subgraph induced by $\N_i$
\begin{enumerate}
\item Let $(R^{(i)},B^{(i)})$ be a EIC problem with problem graph $\Gi$, keeping vertex labels from $\probset$
\end{enumerate}
\item $T = T+ \rminrk_2(\Gi,\mathcal{A}_{(R^{(i)},B^{(i)})})$
\end{enumerate}
\item $T$ is the total cost of the optimal task-based solution given neighborhood partition $\{\N_1,...,\N_n\}$.
\end{enumerate}
\end{algo}

Figure~\ref{fig:costratio} shows the ratio of the length of our approximately optimal task based solution compared to the length of the optimal centralized solution. 
This ratio upper bounds the ratio of a true optimal task based solution to the corresponding centralized solution. In all of our experiments this approximation ratio is upper-bounded by $1.4$. As in the experiments in Figure~\ref{fig:searchratio}, Erd\H{o}s-Renyi graphs are randomly generated for a variety of values for $n$, the number of nodes, and $p$, the directed edge probability. As the size of the graph increases  for a fixed edge probability, the ratio appears to converge. For a fixed number of nodes, there also appears to be some upper bound on the ratio even as the probability of each edge goes to $1$.

\begin{figure}
	\centering  
	\subfloat[{\label{fig:costratioovern}}]{\includegraphics[width=.24\textwidth]{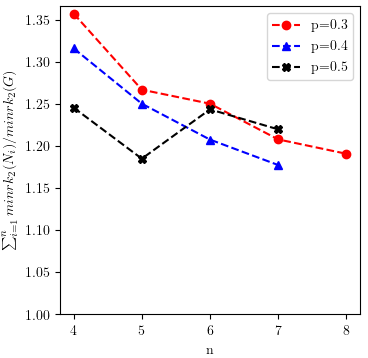}}
		\subfloat[{\label{fig:costratiooverp}}]{\includegraphics[width=.24\textwidth]{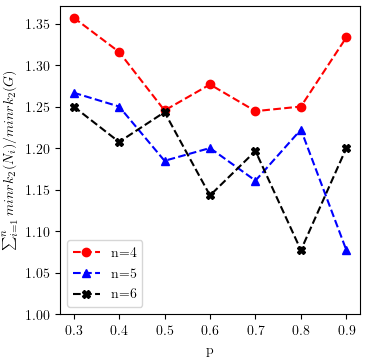}}
\caption{
Ratio of the length of our task-based solution returned by Algorithm~\ref{alg:tbapprox} to the length of the optimal centralized solution.\protect\footnotemark}
\label{fig:costratio}
\end{figure}

\footnotetext{Sample sizes in these experiments are $10$ random graphs, except $p=0.9, n=6$ which only uses $5$, since the search space for the brute-force minrank algorithm explodes, increasing exponentially in the number of graph edges.}

\section{Conclusion}
In this paper we  defined embedded index coding, a special case of multi-sender index coding in which each node of the network is both a broadcast sender and a  receiver.  We characterized an EIC problem using a problem graph, and we used this formulation to show that the optimal length of a solution to an EIC problem is bounded by twice the length of the optimal centralized index coding solution.  We also defined \em task-based \em solutions to EIC problems, in which the set messages broadcast by each node can be decoded independently of messages from other senders, and we proved characterizations and bounds for task-based solutions.  Finally, we used these bounds to develop heuristics for finding good solutions to EIC problems, and showed empirically that these heuristics perform well.

We end with some open questions and future directions.
Since this work first appeared, 
it was shown by~\cite{haviv2019taskbased} that for any integer $k$, there exists an index coding problem with problem graph $G$ and $\minrk_2(G) = k$, such that the task-based solution cost is $\Theta(k^2)$. Since we've shown a decentralized solution has cost within a constant factor of the centralized solution cost, i.e. $\minrk_2(G)$, this result also shows a gap between general decentralized and task-based solutions. However, the exact relationship between decentralized solutions and centralized solutions to embedded problems remains open.
 
It is also an interesting question to 
improve on algorithms for finding task-based solutions.  Our current approach uses an upper bound on minrank, given by the chromatic number of the complement of the problem graph.  This bound is known to be quite loose in some settings. The fractional chromatic number of the complement of the problem graph, $\chi_f(\overline{G})$, has been used to tighten the upper on minrank of $G$~\cite{blasiak2010index}, and it was also shown by~\cite{natarajan2018locally} that the optimal centralized index coding solution size with locality of one is $\chi_f(\overline{G})$. Thus the fractional chromatic number may be a useful approach in this direction. 



\bibliographystyle{plain}
\bibliography{citations}

\end{document}